%% file: main.tex
\documentclass[10pt]{article}
\usepackage[preprint]{tmlr}
\usepackage{amsmath,amssymb,amsthm,graphicx,booktabs,longtable,tabularx,array,calc}
\usepackage{hyperref,url}
\usepackage{algorithm,algpseudocode}
\hypersetup{colorlinks=true,linkcolor=blue,citecolor=blue,urlcolor=blue,hypertexnames=false,pdftitle={Reusing Statistical Guarantees Under Change: Selection Costs and Event Ownership},pdfauthor={Ao Li and Weitong Chen}}

\newtheorem{proposition}{Proposition}
\newtheorem{theorem}[proposition]{Theorem}
\title{Reusing Statistical Guarantees Under Change:\\Selection Costs and Event Ownership\thanks{Codex was used to assist with experimental code development, Google Gemini was used to assist with manuscript writing, and Elicit was used for additional literature search and citation checking. The authors are responsible for all content.}}
\author{\name Ao Li \\
\addr Australian Institute for Machine Learning, Adelaide University,\\Adelaide, Australia
\AND
\name Weitong Chen \email weitong.chen@adelaide.edu.au \\
\addr Australian Institute for Machine Learning, Adelaide University,\\Adelaide, SA 5005, Australia}
\begin{document}
\maketitle
\begin{abstract}
An adaptive system can retain an earlier estimate while changing what that estimate is used to guarantee. The guarantee depends on the experiment that produced the evidence, including how its target was selected and which assumptions connect it to the current claim. We show that past-measurable selection preserves future martingale dynamics without preserving a unit expected starting value. An exact finite counterexample, using a strictly positive family, refutes a published changing-prior extension as stated with probability one and rules out unrestricted bounds based only on switch count and the error level. We give a paid-start multi-switch correction and establish $m$ as the sharp worst-case expected starting-moment cost for an unstructured family of $m$ hypotheses. We then develop Adaptive Assurance, a framework in which statistical events retain their experiment, funding and outcome bindings while claims and premise dependencies change. Compatible-outcome projection supports reuse, selective invalidation, fresh acquisition and auditable decisions. Two histories can have identical current regions yet require different responses to a later edit, making evidence origin part of the assurance state. Studies in six public-recorded domains exercise heterogeneous outcomes and access contracts. Component interventions expose erroneous inheritance and unnecessary resets. The framework separates conditional guarantee recovery from competence adaptation, with explicit costs for evidence, computation and verification.
\end{abstract}

\section{Introduction}
A monitoring system audits a simulator's remaining-useful-life outputs. Later, a revised reporting convention clips those outputs at a different value. The old observations and their confidence interval are still stored. Some newly reported outputs have also been inspected directly. When the relation between the two conventions is withdrawn, which evidence can still justify a claim about the revised output? Direct target observations remain relevant. Values derived through the withdrawn relation lose their support. Retaining the old interval alone cannot express this difference.

The same tension arises when a classifier is replaced, a health-data request changes from disease presence to a finer recorded outcome, or an analyzer reference becomes unavailable. A stored number can remain informative without retaining the guarantee required by its new use. Reliable reuse therefore requires distinguishing the experiment that established validity from the claims that now consume it.

Selection makes this distinction mathematically consequential. In the changing-prior extension of \citet{chugg2023}, a prior chosen from past observations is applied to an accumulated inferential process with a boundary based on the number of switches. We construct a positive martingale family that violates the printed one-switch boundary on every outcome. Each fixed-parameter process remains correctly initialized, and selection preserves its future dynamics. The missing quantity is the selected mixture's expected starting wealth. Established density and joint-moment methods price that dependence; future-only rebasing funds a different experiment. The result identifies a concrete cost of reassigning ownership to already accumulated evidence. A corrected multi-switch bound pays the observed start at each pre-observation birth, yielding native events that the framework can retain and compose.

That boundary motivates a positive construction. Adaptive Assurance keeps a native event's statistic, experiment start, outcome bindings and failure allocation together. A changed claim consumes the event through qualified semantic relations. New evidence receives its own owner, and premise withdrawal removes only dependent constraints. A compatible-outcome set then supplies the range needed for a justified decision. The maintained state must preserve more than its current range: two histories can define the same full feasible region yet respond differently to the same source revocation. \emph{An assurance state must preserve not only what is currently possible, but why it is possible.}

This gives three linked design principles. \textbf{Validity has an owner}: selection and funding belong to the native experiment. \textbf{Assurance state is edit-relative}: retained origin and dependencies determine lawful future responses. \textbf{Recovery cost is distinct from validity}: correct reuse can reduce reacquisition, but a valid stop rule need not choose an economical audit. These principles organize the theory, method and experiments below.

The paper develops this invariant at three levels. The selected-start theorem establishes an impossibility boundary for unpaid reassignment of accumulated evidence. An event-owned assurance representation and edit lifecycle preserve statistical ownership together with premise support. Executable heterogeneous maintenance and six public-recorded domains demonstrate the consequences for reuse, invalidation and recovery. Established statistical inference, truth maintenance and optimisation supply the construction's foundations \citep{entrust2018,dekleer1986,ppi2023,evergreen2026}. Component interventions and a controlled provenance comparison show why the ownership fields matter; matched acquisition controls locate the remaining recovery costs. Section~\ref{sec:prior} relates this contract to its nearest methods.

\input{sections/selected_start}

\section{From an earlier guarantee to an editable assurance state}\label{sec:contract}
\subsection{A running lifecycle}
Return to the simulator example. Let $R\geq0$ be a recorded remaining-useful-life output, and define the old and revised quantities
\[
 x=\min(R,200)/200,\qquad y=\min(R,125)/125.
\]
An audit of $x$ owns an interval for its original mean. A qualified clipping relation constrains $y$ without turning that interval into a direct audit of $y$. Individually observed $x$ values also yield exact derived values $y=\min(1,8x/5)$. Independently observed $y$ values have a different origin. Withdrawing the relation removes the derivations while preserving the independent observations. Acquisition can then fill the exposed gap. Figure~\ref{fig:ownership} shows this lifecycle alongside the generic contract.

\begin{figure}[t]
\centering\includegraphics[width=\linewidth]{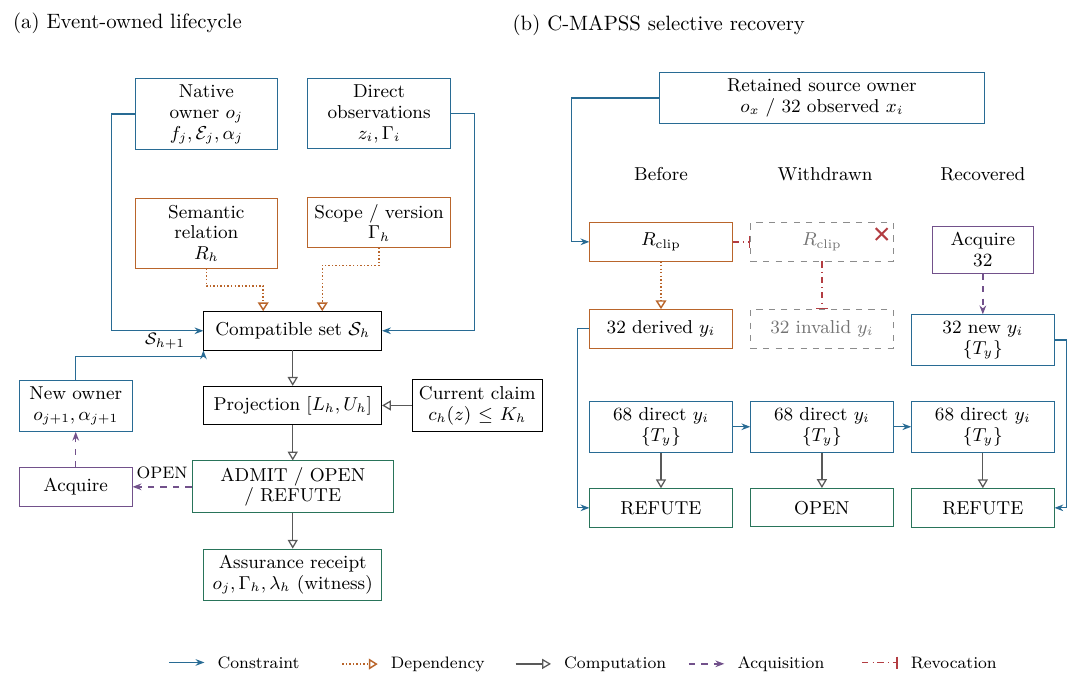}
\caption{\textbf{Event-owned assurance.} (a) Constraints and dependencies define the compatible set; the set and current claim determine the projected range. OPEN initiates acquisition and a new funded owner. Edge styles distinguish these operations. (b) Withdrawing the clipping relation invalidates derived target evidence while retaining the source owner and independent observations. Decisions progress from REFUTE to OPEN and back to REFUTE after selective recovery.}\label{fig:ownership}
\end{figure}

\subsection{Validity has an owner}
Let $z$ collect outcome coordinates bound to a subject, channel and version, with declared domain $D$. Binary losses, categorical outcomes and bounded measurements are different coordinate types. Unknown outcomes remain free coordinates. A native owner records
\[
 o_j=(f_j,\mathcal E_j,\beta_j,\alpha_j,b_j,\Gamma_j),
\]
where $f_j$ is the statistic, $\mathcal E_j$ its experiment, $\beta_j$ its outcome binding, $\alpha_j$ its failure allocation, $b_j$ its birth history, and $\Gamma_j$ its premises. The native event is $E_j=\{f_j(z)\in I_j\}$. The funding contract is
\begin{equation}\label{eq:funding}
 \Pr(E_j^c\mid\mathcal H_{b_j})\leq\alpha_j,
 \qquad \sum_{j\text{ created}}\alpha_j\leq\delta\quad\text{pathwise}.
\end{equation}
Earlier data may select and initialize an owner, but their selection cost must be accounted for at birth. Theorem~\ref{prop:paidstart} supplies a concrete instance: normalize the selected mixture by its observed start $S_i$ at $T_i$, then fund its future trajectory with $\alpha_i$. The original statistic and binding remain fixed. Its native event $E_i$ therefore satisfies the same conditional contract as an audit interval and can be used in Proposition~\ref{prop:reuse}. Separately funded fixed looks and a valid confidence sequence are alternative native components; the latter can own an entire observation trajectory.

Point evidence records an observed outcome and its origin. Qualified relations, such as a category coarsening, subset, unit conversion or clipping map, carry premises. Active objects at history $h$ define
\begin{equation}\label{eq:region}
 \mathcal S_h=\{z\in D:\ f_j(z)\in I_j\text{ for active owners};\
 z\text{ satisfies active points and qualified relations}\}.
\end{equation}
For a possibly history-selected claim $c_h(z)\leq K_h$, compute an outer range $[L_h,U_h]\supseteq c_h(\mathcal S_h)$. The output is ADMIT if $U_h\leq K_h$, REFUTE if $L_h>K_h$, and OPEN otherwise. Unqualified scope or failed access also produces OPEN.

\begin{proposition}[Adaptive consumers of funded events]\label{prop:reuse}
Assume Equation~\ref{eq:funding}, truthful active points, and valid active semantic and scope premises. With probability at least $1-\delta$, every emitted admission or refutation, across arbitrarily selected consumers of the created native events, is correct for the bound outcomes.
\end{proposition}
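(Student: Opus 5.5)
The proof is a union-bound argument on a single good event, followed by a deterministic implication that holds on that event regardless of how consumers are selected. I will take the good event to be $G=\bigcap_{j\text{ created}} E_j$, so that $G^c=\bigcup_j E_j^c$. The first step is to show $\Pr(G^c)\le\delta$. The difficulty is that the set of created owners, their number, and their allocations $\alpha_j$ are all random and history-dependent. A plain union bound is therefore not directly available.

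I would handle this by indexing births in order, $j=1,2,\dots$, with $\alpha_j=0$ and $E_j$ taken as the sure event whenever owner $j$ is never created. Then
\[
\Pr(G^c)\le\sum_j \Pr(E_j^c)=\sum_j \mathbb E\bigl[\Pr(E_j^c\mid\mathcal H_{b_j})\bigr]\le \mathbb E\Bigl[\sum_j\alpha_j\Bigr]\le\delta .
\]
This uses three facts in turn. The tower property applies because the decision to create owner $j$ and its allocation $\alpha_j$ are $\mathcal H_{b_j}$-measurable. The conditional funding bound in Equation~\ref{eq:funding} bounds each conditional term by $\alpha_j$. Monotone convergence justifies exchanging the sum and the expectation, and the pathwise budget then gives the final inequality.

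For owners built from selected mixtures, the conditional bound $\Pr(E_i^c\mid\mathcal H_{T_i})\le\alpha_i$ is exactly what Theorem~\ref{prop:paidstart} provides, so no extra selection cost arises. I expect this to be the main obstacle. Care is needed to ensure that the conditioning is on the birth history and not on any later information, and that the budget holds for every path rather than only on average. The selected-start counterexample shows precisely what fails otherwise.

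The second step is deterministic. On $G$, the true bound outcome vector $z^\ast$ lies in every native constraint $f_j(z)\in I_j$. By assumption, every active point equals the corresponding coordinate of $z^\ast$, and every active qualified relation with valid premises is satisfied by $z^\ast$. Withdrawn premises only remove constraints, so they can only enlarge the set. Hence $z^\ast\in\mathcal S_h$ for every history $h$ simultaneously, whatever edits, selections or acquisitions have occurred.

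Now take any claim $c_h(z)\le K_h$, including one chosen adaptively from the history. Since $c_h(z^\ast)\in c_h(\mathcal S_h)\subseteq[L_h,U_h]$, an ADMIT, which requires $U_h\le K_h$, implies $c_h(z^\ast)\le K_h$. Likewise, a REFUTE, which requires $L_h>K_h$, implies $c_h(z^\ast)>K_h$. OPEN outputs and unqualified-scope cases emit no claim and need no check. Because this implication holds pointwise on $G$ for all $h$ and all consumers at once, the arbitrary selection of consumers costs nothing beyond $\Pr(G^c)\le\delta$. This yields the stated $1-\delta$ guarantee.
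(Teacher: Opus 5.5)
Your proof is correct and follows the paper's argument: you enumerate births countably with zero allocation and sure events for uncreated owners, condition at each birth history, apply monotone convergence and the budget to get $\Pr(\exists j: E_j^c)\le\delta$, and then show deterministically that the true bound outcome lies in every active compatible set and hence in every adaptively selected consumer's outer range, as in Appendix~\ref{app:projection}. One minor point: the argument only uses $\mathbb E\sum_j\alpha_j\le\delta$, so the pathwise budget is a sufficient hypothesis rather than something the proof needs in its full strength.
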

A union bound over the conditionally funded events establishes simultaneous validity. On that event, the true outcome lies in every active compatible set, including after constraints are removed. Selecting a consumer changes its image while preserving this membership. This is confidence-set projection applied to an editable event contract; Appendix~\ref{app:projection} gives the countable conditional argument. If $B$ is failure of a truth, semantic or scope premise required by a claim, the unconditional error is bounded by $\delta+\Pr(B)$ under the native-event assumptions. Premise risk is a separate quantity, potentially unknown.

\subsection{Edits follow evidence dependencies}
A claim change modifies the consumer. A semantic revision modifies the relation connecting outcome channels. Erasing a point removes that observation constraint while an independently funded aggregate can remain. Revoking a premise removes every dependent object, including inferred points. A new measurement receives its own binding. A new statistical experiment consumes fresh allocation. Spent allocation remains spent after deletion or revocation.

These operations make selective recovery possible. When the clipping relation is withdrawn, old-channel evidence remains valid for $x$ and independently observed values remain valid for $y$. Only the relation-dependent target values need reacquisition. The original owner is retained without being redefined as a new target audit.

\subsection{Assurance state is edit-relative}
Belief revision and base-generated contraction provide the precedent: a logical closure and the supporting base encode different aspects of a belief state; equal closures can retain different change responses \citep{agm1985,hansson2007,oveisi2014}. Our assurance state extends this support distinction to funded events, outcome bindings and semantic premises, so that later edits determine both the current decision and the evidence requiring reacquisition.

Current correctness alone does not specify an editable assurance state. Consider $x,y\in[0,1]$ with relation $x=y$. One history contains direct $x=0$ evidence supported by source $T_x$; another contains direct $y=0$ evidence supported by $T_y$. Both define $\mathcal S_h=\{(0,0)\}$ and admit $y\leq1/2$. Revoking $T_x$ makes the first history OPEN and leaves the second admission intact.

\begin{proposition}[Current-region equality does not preserve edits]\label{prop:edit}
A representation determined only by $\mathcal S_h$ cannot reproduce every provenance-sensitive premise-revocation response of the event-owned contract.
\end{proposition}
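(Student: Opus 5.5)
The plan is a two-history separation argument built on the example in the preceding paragraph. Formalize a ``representation determined only by $\mathcal S_h$'' as any map $\Phi$ from histories to states such that $\mathcal S_h=\mathcal S_{h'}$ implies $\Phi(h)=\Phi(h')$, together with an update rule $\rho$ that sends a state and an edit (here, revocation of a named source premise) to a new state, with decisions read off the updated state. Reproducing the event-owned contract then means that for every history $h$ and edit $e$, the decision read from $\rho(\Phi(h),e)$ equals the ADMIT/REFUTE/OPEN output of the contract at the edited history $h\cdot e$. It suffices to exhibit $h_1,h_2$ and one edit $e$ with $\mathcal S_{h_1}=\mathcal S_{h_2}$ but different contract outputs after $e$. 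Then $\rho(\Phi(h_1),e)=\rho(\Phi(h_2),e)$, so the two decisions coincide and at least one of them is wrong.

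Take $D=[0,1]^2$ with coordinates $(x,y)$, an active qualified relation $x=y$ whose premise is retained throughout, and the claim $y\le 1/2$.
\begin{itemize}
\item In $h_1$, a truthful point $x=0$ is supported by source $T_x$.
\item In $h_2$, a truthful point $y=0$ is supported by source $T_y$.
\end{itemize}
Both histories contain no funded owners, so Equation~\ref{eq:funding} is vacuous. By Equation~\ref{eq:region}, $\mathcal S_{h_1}=\mathcal S_{h_2}=\{(0,0)\}$. Using the tight range $[L,U]=[0,0]$, both histories ADMIT. Now let $e$ be revocation of $T_x$. By the edit rules, revocation removes every dependent object.
\begin{itemize}
\item In $h_1$, the only dependent object is the $x=0$ point. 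The compatible set becomes $\{(t,t):t\in[0,1]\}$, the projected range of $y$ is $[0,1]$, and the output is OPEN, since $U=1>1/2\ge 0=L$.
\item In $h_2$, nothing depends on $T_x$. The region stays $\{(0,0)\}$ and the output remains ADMIT.
\end{itemize}
This is the required separation.

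The main obstacle is pinning down what ``determined only by $\mathcal S_h$'' allows, so that the argument cannot be evaded. One loophole is that $\rho$ might take $T_x$ as an input. The edit is the same symbolic operation in both histories, so this does not help unless $T_x$'s support is encoded in the state, which the hypothesis forbids. A second loophole is that OPEN might be produced by a loose outer range. I will fix both cases by exact projection, and also note that no sound outer range for $h_1\cdot e$ can ADMIT, because it must contain $c(\mathcal S)=[0,1]$. Therefore every contract implementation gives OPEN there, and ADMIT in $h_2\cdot e$ is likewise forced, since $[0,0]$ is the exact image. The proposition then holds for every admissible range oracle, not just one choice.
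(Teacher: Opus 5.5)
Your argument is the paper's own. You use the same $x=y$ witness with direct points supported by $T_x$ and $T_y$, the same common singleton region $\{(0,0)\}$, and the same determinism step: identical states under an identical edit must give identical outputs, while the contract's justified responses (OPEN versus ADMIT) differ.

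Your closing remark needs one correction. Soundness does force OPEN at $h_1\cdot e$. At $h_2\cdot e$, however, it only rules out REFUTE, because a sound outer range there need only contain $0$. A loose but admissible oracle, such as one that always returns the domain range $[0,1]$, gives OPEN in both edited histories. A contract using that oracle is trivially reproducible from $\mathcal S_h$. The separation therefore needs exact, or sufficiently tight, projection at $h_2\cdot e$, which is what the paper assumes when it says both histories admit $y\le 1/2$. Drop the claim that the result holds for every admissible range oracle.
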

Identical initial representations and an identical edit require identical deterministic outputs, but the justified responses differ. Transforming the current region cannot restore discarded origin information. Section~\ref{sec:ablation} gives an exact counterpart using C-MAPSS records.

Fix an admitted language $\mathcal U$ of edits, claims and access responses. Histories are response-equivalent when every future sequence in $\mathcal U$ yields the same required outputs, including refusal and availability. An assurance representation must distinguish inequivalent histories. This standard future-response criterion identifies the distinctions needed by the particular lifecycle, and assumption-based support maintenance supplies its provenance machinery \citep{dekleer1986}.

The admitted language changes the required state. For one fixed funded Bernoulli inventory, every upper-threshold response after named erasures depends only on retained-zero membership. Unknown and retained-one cells share the conservative completion response. The resulting $2^N$ response classes expand to $3^N$ when honest retained-versus-paid-reacquisition availability is included. The corresponding information requirements are $N$ and $N\log_2 3$ bits. Appendix~\ref{app:editlanguage} proves these restricted statements. Physical storage also contains scope, protocol and funding information; response-class cardinality describes distinguishability rather than exponential heap size.

\subsection{Changed claims can require richer outcome information}
Presence vectors $(1,1)$ can describe original coded outcomes $(1,1)$ or $(4,4)$, whose normalized ordinal means differ. Conversely, numeric pairs $(0,1)$ and $(1/2,1/2)$ share a mean but have different exceedance rates. These aliases determine which consumers a retained object can support, independently of predictive fit.

Finite categories therefore use simplex coordinates $z_{ik}$ with qualified coarse-to-fine mappings. Presence fixes $\sum_{k>0}z_{ik}$ while preserving alternative positive categories; ordinal claims use linear scores. Bounded measurements retain their numeric coordinates. For the running example, the outer clipping constraints $y\geq x$, $y\leq8x/5$, $y\leq1$ contain every actual pair. Revocation disconnects the two channels.

The executable class uses bounded linear claims over numeric boxes, categorical simplices and linear outer relations. Mature convex relaxation provides containment and can widen OPEN regions. This gives one heterogeneous implementation of the contract, with expressivity and approximation costs stated explicitly.

\section{Projection, evidence acquisition and two-layer trust}\label{sec:method}
\subsection{Numerical projection and exact verification}
Store the active region as $Az\leq b$, $\ell\leq z\leq u$, with rational coefficients. Linear programming proposes nonnegative multipliers $\lambda$ for a claim $d^\top z+q$. Independent arithmetic verifies
\begin{equation}\label{eq:dual}
 U=q+\lambda^\top b+\sum_r\max\{r_r\ell_r,r_ru_r\},
 \qquad r=d-A^\top\lambda.
\end{equation}
Applying the same construction to $-d$ gives $L$. Exact residuals preserve weak duality even when proposed multipliers are approximate. A failed numerical solve returns the box bound. The assurance receipt contains the claim, outcome bindings, owners, premises, edit history, active constraints and bound witnesses; replay requires neither new outcomes nor a solver.

The binary specialization groups subjects with identical active-event and target coefficients into count cells. Every feasible integer cell count has a realizing subject assignment, so linear objective ranges are preserved. Access or provenance distinctions require refinement when an admitted edit can expose them.

\subsection{Premise trust and implication validity}
The mechanism separates two proof obligations. The \emph{premise layer} establishes that observations, native intervals, sampling, semantics and source bindings are authentic and valid. The \emph{implication layer} establishes what follows from those represented premises. Replay owns the latter obligation and exposes its dependence on the former.

\begin{proposition}[Conditional receipt replay]\label{prop:replay}
If replay verifies the nonnegative multipliers and exact residual bounds of a receipt, its reported range contains the represented claim on the active represented region. If authenticated, valid premises place the true bound outcome in that region, every qualified threshold decision follows for that outcome.
\end{proposition}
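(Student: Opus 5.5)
The plan is to prove the two sentences separately. The first is a weak-duality certificate check over exact rationals. The second combines it with the decision rule that defines ADMIT, REFUTE and OPEN.

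\textbf{Containment.} Fix the represented active region $P=\{z: Az\le b,\ \ell\le z\le u\}$ and the claim $d^\top z+q$. Let $\lambda\ge0$ be the verified multipliers and $r=d-A^\top\lambda$ the exactly computed residual. For any $z\in P$ I would write
\[
d^\top z+q=q+\lambda^\top Az+r^\top z .
\]
Nonnegativity of $\lambda$ together with $Az\le b$ gives $\lambda^\top Az\le\lambda^\top b$. Coordinatewise, $r_r z_r\le\max\{r_r\ell_r,r_ru_r\}$, because a linear function of $z_r\in[\ell_r,u_r]$ is maximized at an endpoint. Hence $d^\top z+q\le U$, with $U$ as in Equation~\ref{eq:dual}.

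The same argument applied to $-d$, with its own verified multipliers, gives $-(d^\top z+q)\le -L$. Therefore $c(P)\subseteq[L,U]$.

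Two points need explicit care. The certificate need not be optimal: any verified $\lambda\ge0$ works, since only weak duality is used. The argument also survives the failed-solve fallback: taking $\lambda=0$ gives $r=d$, which is exactly the box bound. Because replay recomputes $r$ and the sums in exact rational arithmetic, no floating-point tolerance enters. This makes the inequality chain an actual proof rather than a numerical approximation.

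\textbf{Decision correctness.} Suppose the authenticated, valid premises place the true bound outcome $z^\star$ in the represented region $P$. Containment then gives $L\le c(z^\star)\le U$.
\begin{itemize}
\item If the receipt says ADMIT, then $U\le K$, so $c(z^\star)\le K$ and the claim holds.
\item If it says REFUTE, then $L>K$, so $c(z^\star)>K$ and the claim fails.
\item OPEN asserts nothing, so it is vacuously correct.
\end{itemize}
A history-selected claim or threshold is fixed by the receipt itself, so the argument applies to the recorded $(d,q,K)$ without any further selection adjustment.

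\textbf{Main obstacle.} I expect the delicate point to be interface bookkeeping rather than mathematics. The receipt's represented region must coincide with the outer region that the semantics map to: $\mathcal S_h\subseteq P$ under the outcome bindings. In particular, the convex relaxations of categorical simplices and clipping relations must be \emph{outer}.

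I would state this as the hypothesis that the premises place $z^\star$ in $P$. That confines the conditional character of the result to the premise layer, as in the two-layer split. The probabilistic statement then follows by citing Proposition~\ref{prop:reuse} for when that hypothesis holds.
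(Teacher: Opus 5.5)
Your proposal is correct and follows the paper's argument. The first claim is the rational weak-duality inequality with the residual retained exactly, applied to both $d$ and $-d$, valid for nonoptimal multipliers and reducing to the box bound at $\lambda=0$; the second is membership of the true outcome in the represented region followed by the threshold rule, and placing outer relaxations and binding validity in that membership hypothesis matches the paper's split between the premise layer and the implication layer.
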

The first statement is rational weak duality; the second is set membership followed by projection. Before accepting a native row, replay also checks that the funded statistic and subject, channel and version bindings remain unchanged. Hashes bind document content for integrity checking. Authentication, sampling honesty and native interval generation belong to the premise layer.

A controlled forgery makes this separation concrete. An asserted interval $z\in[.4,.6]$ refutes $z\leq.3$. Replacing the entire interval and its witness by a coherent document asserting $z\in[.1,.2]$ passes implication replay and admits the claim, despite diagnostic external truth $z=.5$. The forged document fails the premise obligation. Changing only the reported bound while retaining the original premises is rejected. Thus trustworthiness consists of a checkable implication together with visible, separately warranted premises.

\subsection{Recovery cost is distinct from validity}
\begin{algorithm}[t]
\caption{Event-owned assurance maintenance}\label{alg:maintenance}
\begin{algorithmic}[1]
\Require Retained owners and points, current claim, qualified premises, access and remaining allocation
\State Apply edits; remove dependent constraints; retain spent allocations
\State Project the claim and replay its bound witness
\If{qualified outer range decides the threshold}
 \State \Return decision and assurance receipt
\EndIf
\If{scope, access or funding is unavailable}
 \State \Return OPEN with unresolved obligations
\EndIf
\State Fix an audit population before observing its outcomes
\State Create a native owner and fund its looks or sequential process
\State Acquire outcomes with origin and premise dependencies
\State Reproject at funded looks; decide or return OPEN at the declared limit
\end{algorithmic}
\end{algorithm}

Algorithm~\ref{alg:maintenance} separates a proposal about what to inspect from the compatible range that justifies stopping. The binary study compares endpoint-guided, uniform and fresh audit rules. Endpoint guidance selects observations that can separate opposite-answer witnesses. Each audit population is fixed before its draws and its declared looks are separately funded. Numeric and categorical studies use corresponding native audits to test assurance operations.

A valid current interval can still conceal different acquisition requirements. For $(y_1+y_2)/2\leq1/4$, retaining $y_1=y_2$ makes one inspection sufficient. Without the relation, a first observed zero leaves both answers possible, requiring two inspections in the worst case. Both initial count ranges are $[0,2]$. Decision-region and decision-tree methods solve acquisition under richer supplied hypotheses, tests and priors \citep{golovin2010,javdani2014}. The distinction between assurance state and acquisition state motivates recording evidence, search, verification and retained memory separately.

\section{Empirical consequences of event ownership}\label{sec:experiments}
\subsection{Domains and authority}
The domain instruments in Table~\ref{tab:domains} expose distinct edit and access operations. Public recorded outcomes are hidden from a controller until the declared inspection reveals them; their complete reference is available to the evaluator. Recorded labels and simulator outputs supply truth relative to these experiments. Source work, computational references and semantic relations are explicit parts of the authority and resource contract.

\begin{table}[t]\centering\small
\begin{tabularx}{\linewidth}{>{\raggedright\arraybackslash}p{.23\linewidth} r >{\raggedright\arraybackslash}X >{\raggedright\arraybackslash}X}\toprule
Frame & Size & Assurance operation & Reference / authority\\\midrule
Adult\newline\citep{adult1996} &16,281&Model replacement, selection, subgroup, erasure&32,561 training labels; recorded income\\
BPI2012\newline\citep{bpi2012} &8,897&Workflow-rule and scope changes&Computational Petri reference; 263 unknown traces excluded\\
Heart Disease\newline\citep{heart1989} &303/123/200&Presence to ordinal mean; relation/truth withdrawal&Recorded codes 0--4\\
C-MAPSS\newline\citep{cmapss2008} &100/259&Clipping revision; selective target recovery&Simulator-relative RUL\\
Seeds\newline\citep{seeds2010} &210&Unit conversion; unrelated-channel refusal&Recorded wheat-kernel measurements\\
Air Quality\newline\citep{air2008} &9,357&Available to incomplete full frame; source withdrawal&7,674 observed values; 1,683 arbitrary missing outcomes\\\bottomrule
\end{tabularx}
\caption{Six domains exercise one contract through different outcomes and operations. Sizes are fixed evaluation frames; repeated decisions and orders are dependent. Health codes are descriptive recorded outcomes, and the simulator and crop studies use recorded output and measurement contracts.}\label{tab:domains}\end{table}

Within each retention condition, deployable acquisition comparisons share source evidence, semantics, access, funding and claims. A complete census is charged. Exact enumeration is a small-case computational reference. Generic projection on the same active constraints is the mature matched primitive and reproduces the same finite ranges. Scope-blind copying and component omissions are invalid-input diagnostics. Appendix~\ref{app:experiments} gives authority passports, preprocessing and sampling units.

\subsection{Selective invalidation and recovery}
The three multivalued Heart Disease files \citep{heart1989} retain codes 0--4. A 64-subject binary presence audit supports an ordinal mean through the qualified coarse relation. Target audits and independent categorical points add finer information. Withdrawing the coarse relation preserves direct target evidence; withdrawing target truth removes it. The Hungarian processed file contains only binary outcomes, so requests for unavailable fine codes remain OPEN.

C-MAPSS \citep{cmapss2008,saxena2008} gives a numeric counterpart: revising a reporting convention changes the target without discarding the source experiment. Direct target observations retain independent support when the clipping relation is withdrawn. Selective recovery acquires only the exposed target values. Figure~\ref{fig:interval} separates this acquisition from premise withdrawal; Table~\ref{tab:receipt} gives the corresponding ranges. A future-version claim still requires a qualified link to that version.

\begin{figure}[t]\centering\includegraphics[width=.94\linewidth]{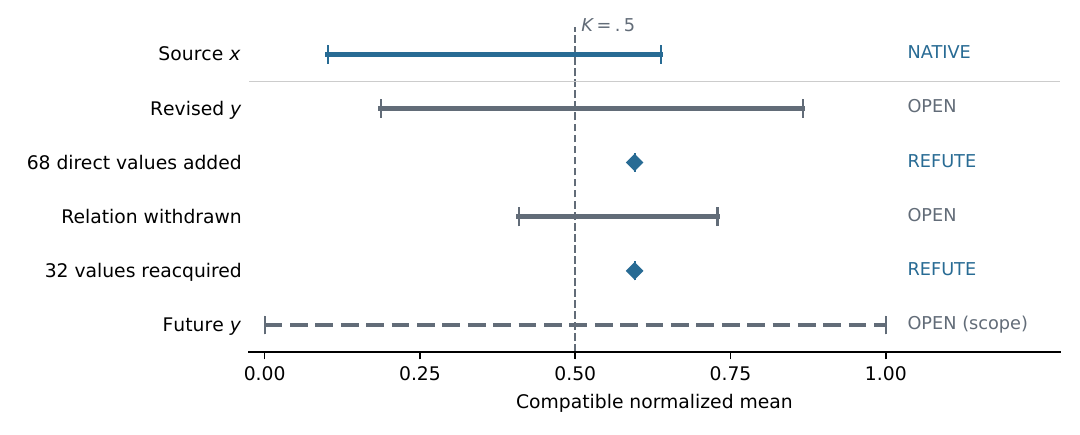}
\caption{\textbf{Assurance evolves with its evidence scope.} Normalized C-MAPSS FD001 means under the old ($x$) and revised ($y$) clipping conventions. Adding 68 direct values first completes the recorded target. Relation withdrawal then widens the range; selective reacquisition restores it. At the recorded target census it refutes $y\leq.5$; the future-scope request has no qualified binding. Source and target quantities are distinct estimands.}\label{fig:interval}\end{figure}
\input{tables/receipt}

Seeds \citep{seeds2010} uses a clipped area mean and a known unit conversion. Independent target observations survive source-area withdrawal; an unrelated perimeter claim lacks a supplied relation. Air Quality \citep{air2008} tests a different boundary: a uniform audit of 256 available analyzer values supports their clipped mean, but expanding to all hours adds 1,683 unconstrained missing outcomes. Even an available-reference census leaves
\[
 [S/N,(S+M)/N],\qquad S=\sum_{i\ \mathrm{available}}\min(1,\max(0,\mathrm{CO}_i/10)),\quad M=1683.
\]
This range covers every completion. Design-based sampling of the fixed available frame requires neither temporal independence nor missing-at-random assumptions. Revoking analyzer truth removes its points and aggregate. The data contain 9,357 dated rows, compared with the catalog count of 9,358.

The full-frame interval is the finite-record form of established worst-case partial-identification bounds for a bounded outcome with arbitrary missingness \citep{horowitz2000}. Its role here is to separate unavailable outcomes from sampling uncertainty and to retain the analyzer premise governing the observed contribution.

The heterogeneous study comprises 852 dependent decisions, with 405 qualified conclusions agreeing with recorded outcomes or all compatible missing completions. Its 291 full certificates replay without new outcomes. A separate binary relation/reference study contains 660 decisions, 364 qualified conclusions and 43 receipts. These counts describe implementation behavior. The reliability guarantee follows from the event contract and its premises. Exact finite controls compare 110 grouped binary projections with subject-level optimization and enumeration, and categorical/numeric controls check compatible-truth containment.

\subsection{Component interventions and a controlled provenance comparison}\label{sec:ablation}
To test the framework structure outside small witnesses, we apply controlled omissions to Heart, C-MAPSS, Seeds and Air frames. The native-identity intervention copies a complete source mean into a different target functional: presence into ordinal mean, one clipping convention into another, area into an unrelated perimeter channel, or available-frame mean into an incomplete full-frame claim. The complete contract rejects all four rebindings. Each domain has one contradictory copying decision on the predeclared five-threshold grid $K\in\{.1,.25,.5,.75,.9\}$ (Table~\ref{tab:ablation}). A contradiction is checked against recorded truth or an allowed completion of missing outcomes.

\input{tables/native_components}

A dependency omission instead keeps relation-derived evidence after relation withdrawal. Its numerical answer can remain equal to the recorded outcome even though the supporting premise has been removed. We therefore measure invalid evidence survival separately from contradictory decisions. With explicit dependencies, every derived target point is removed and every independent target point survives. Resetting all target evidence purchases those independent values again. At equal restored information, selective retention saves exposures. Construction, projection and replay remain separate costs.

The selective/reset proportions in these controlled edits are determined by how much independently supported evidence the intervention preserves. They test lifecycle semantics and recovery consequences; acquisition performance is assessed separately under common sampling contracts.

Scope and refusal have another effect. Dropping a future-scope requirement, or substituting the last qualified answer for OPEN, yields 19 unsupported decisions across 20 fixed requests. Such decisions need not contradict the old recorded frame; they lack a qualified link to the future claim. The interventions therefore distinguish contradictory decisions, stale support, scope failure and unnecessary reacquisition.

For the real-record edit analogue, two controlled C-MAPSS FD001 histories include the same full old-channel census. One obtains target values through the clipping relation; the other inspects all target values independently. Both define the identical joint singleton region and the exact target mean $1489/2500=.5956$. After the same relation withdrawal, their target ranges are respectively $[0,1]$ and $[.5956,.5956]$, giving OPEN and REFUTE for $y\leq.5$. This is a constructed acquisition-history comparison on real records, not a naturally observed deployment episode. It shows that the future-edit proposition applies to the actual heterogeneous quantities used by the method, even when the complete current region is known.

\subsection{Correct recovery does not guarantee economical acquisition}
Adult \citep{adult1996} uses source-trained logistic models with a replacement and threshold change. BPI2012 \citep{bpi2012} uses a supplied Petri reference qualified by checked firing paths and left-nullspace refutations \citep{murata1989}. Four fixed traces receive cost-based alignment checks \citep{adriansyah2011} through PM4Py \citep{pm4py2023}. Eight common source orders generate twelve trajectories per frame: three acquisition rules under summary, full, 128-point and no-new-access retention. Nine changes include a threshold, model or rule replacement, subgroup, common source-selected target, erasure, source-premise withdrawal and unsupported outcome relation. Each trajectory allocates .02 to its source and at most eight roots of .003 with separately funded looks.

\input{tables/lifecycle}
The binary study has 1,728 decisions across 192 trajectories and 1,227 qualified conclusions, all agreeing with the recorded frame. Endpoint guidance nevertheless buys more evidence than fresh evaluation in some matched regimes. Full retention can improve reuse without lowering total burden. Table~\ref{tab:lifecycle} retains those adverse comparisons; Appendix~\ref{app:burden} separates exposures from wall time. The source audits of 2,048 and 512 outcomes, training, reference construction, retained memory and replay remain additional resources. This evidence limits the acquisition policy while preserving the correctness and selective-invalidation contract.

A bounded supplementary study makes the statistical owner's contribution visible before census. Fixed-complement audits of revised C-MAPSS outputs and calendar-defined Air Quality consumers reach qualified decisions using partial samples (Appendix~\ref{app:strengthening}). Retention can reduce new observations, but its changed complement also changes sample composition and can delay stopping. An ordinary complement confidence interval intersecting all funded looks gives exactly the same decisions under the same information and funding. A 40-order paired study reports retention costs with median intervals and separate calendar consumers. Practical cached-interval and origin-blind-sample controls can inherit incorrect answers after declared consumer edits, whereas a provenance-tagged completion control preserves the same valid responses as the framework. Thus mature inference supplies the interval; the framework makes its ownership, editable dependencies and replayable implications explicit.

\section{Related work: inference, support and assurance lifecycles}\label{sec:prior}
\paragraph{Belief bases and partial identification.}
Belief-base contraction connects retained commitments, dependence and selective belief change \citep{oveisi2014}. Worst-case partial identification bounds outcomes under arbitrary missingness \citep{horowitz2000}. These precedents supply the support distinction and completion semantics; event ownership additionally retains the native statistical experiment, its allocation and outcome bindings through edits.

\paragraph{Dynamic assurance.}
ENTRUST combines design-time controller verification with runtime model updates, reconfiguration, reusable evidence and dynamically generated assurance cases \citep{entrust2018}. It supplies a mature surrounding architecture. Adaptive Assurance contributes an event-specific statistical ownership layer for reuse and invalidation under edited consumers and premises.

\paragraph{Truth maintenance and probability provenance.}
Assumption-based truth maintenance represents alternative supporting environments and propagates consequences when assumptions change \citep{dekleer1986}. HaPPI maintains query-result probabilities under fact and probability updates through a provenance semiring \citep{gaur2021}. Evergreen connects semantic aggregate queries, confidence-sequence estimation and stopping, and provenance-based citations \citep{evergreen2026}. These methods establish the value of maintained support. Our contract attaches that support to the native statistical event and its funded coverage. Fact probability, justification and coverage allocation retain their different mathematical meanings.

\paragraph{Adaptive statistical inference.}
Prediction-powered inference rectifies model predictions with labeled data \citep{ppi2023}. Prediction-powered e-values allow predictable predictor and data-collection updates under bounds that keep the correction nonnegative \citep{ppe2025}. Prediction-powered risk monitoring uses predictable reliance on unlabeled predictions and analyzes intermittent labels and computation \citep{pprm2026}. These are reusable native inferential components. The present question concerns reassignment and reuse of accumulated guarantees while preserving the experiment that paid for them. Stochastic-prior joint moments \citep{rivasplata2020}, density-based prior control \citep{dziugaite2018} and positive anytime constructions \citep{morepac2024} provide valid selection-aware methods.

Recursive PAC-Bayes retains earlier risk confidence by recursively combining valid prior-risk and excess-loss bounds under a sequential sample partition \citep{recursive2024}. Alpha-spending pays for new tests; alpha-investing earns further testing wealth under an expected-false-discovery criterion \citep{foster2008}. Our lifetime guarantee uses spent failure allocations attached to native events. New consumers of an existing compatible set do not create new tests, and revocation does not restore spent allocation. Appendix~\ref{app:prior} also relates the contract to reusable holdouts, post-selection inference and e-value combination.

\begin{table}[t]\centering\small
\begin{tabularx}{\linewidth}{l X X}\toprule
Prior work & Core capability & Relationship to Adaptive Assurance\\\midrule
Recursive PAC-Bayes&Recursive prior-risk and excess-loss bounds&Retains confidence through qualified sequential experiments\\
ATMS / HaPPI&Contextual support; maintained query probabilities&Support and update machinery attached to native-event coverage\\
ENTRUST&Dynamic assurance cases and reconfiguration&Surrounding architecture for the event-ownership lifecycle\\
Evergreen&Semantic queries, confidence-sequence estimation and stopping, and provenance&Query machinery connected to editable native owners and dependencies\\
PPI / PPE&Prediction correction and predictable inference&Valid native components whose experiment identity is retained\\
PPRM&Risk monitoring with predictable reliance&Deployed-risk inference complements fixed-frame edit semantics\\
Stochastic-prior PAC-Bayes&Joint moment accounting for selected priors&Correctly prices the dependence exposed by the selected-start witness\\\bottomrule
\end{tabularx}
\caption{A knowledge map of the mature capabilities used by or adjacent to the framework. The comparison concerns mathematical objects and contracts, rather than runtime rankings of the original complete systems.}\label{tab:prior}\end{table}

Provenance semirings propagate symbolic source support through relational operations \citep{green2007}. These annotations provide dependency machinery; a native owner additionally fixes the inferential experiment, outcome binding and failure allocation. Confidence-set images, incomplete-information semantics \citep{imielinski1984}, exact proof checking, linear programming and finite-population inference \citep{wor2020} supply further machinery. The framework's connecting invariant is that native experiment identity and spent allocation remain fixed while edited consumers and dependencies change the justified image. The selected-start theorem constrains the statistical side of that invariant; the future-edit proposition constrains the retained information needed for future edits. Appendix~\ref{app:prior} records the relevant method assumptions and source versions.

\section{Discussion}
\subsection{Three consequences for adaptive systems}
\textbf{Statistical validity belongs to an experiment, rather than its stored numerical result.} A selected mixture can have valid future dynamics and unpaid starting wealth. A changed threshold can consume a funded confidence event directly, whereas a changed statistic requires a qualified relation or new evidence. Event ownership makes those operations explicit.

\textbf{Future editability depends on evidence origin.} A current feasible region can erase the difference between independent and derived support. Both the exact coordinate witness and the controlled real-record comparison show why later revocation needs retained dependencies. The $2^N/3^N$ comparison further shows that access requirements determine which distinctions a representation must preserve. Compression is therefore relative to the operations the system promises to support.

\textbf{Assurance recovery and competence adaptation are separate resource problems.} Reusing a trained model may require fresh assurance; a retained event may answer a new claim without changing the model. Selective invalidation can preserve observations and still leave acquisition planning expensive. The matched adverse results make this separation practical: fewer inspections, less search and smaller retained state are different benefits.

These observations suggest an assurance lifecycle alongside a system's competence lifecycle. A decision should expose its current claim, supporting event owners, active premises, removed dependencies and unresolved alternatives. The receipt derives that explanation from the computation itself. Source authentication and premise validation can then be attached as separate services with visible responsibility.

\subsection{Scope and limitations}
The reliability theorem assumes valid native events, truth and semantic/scope premises. Public recorded frames and simulated paid access test those assumptions' consequences; acquired clinical truth, physical digital-twin validity and autonomous future-drift assurance require additional evidence. The implemented relation class is linear and bounded. Certified outer relations can support a nonlinear consumer of an assured mean (Appendix~\ref{app:strengthening}); broader nonlinear consumers require their own outer constructions. Relaxations can widen refusal. Full JSON snapshots occupy about 259 MB for 291 heterogeneous certificates, so auditability has a material storage cost. Source acquisition, human reference construction and semantic authority remain charged or explicitly unknown. The selected-start result concerns the stated publisher version of the changing-prior extension and its unrestricted switch-count contract.

\section*{Broader impact}
The mechanism can expose unsupported assurance in adaptive monitoring. Its outputs can also be overtrusted if an implication certificate is mistaken for premise authentication. Deployment should expose the active premises and require their independent validation before consequential use. The health study analyzes recorded outcomes rather than clinical decisions; restricted pathology records were excluded.

\bibliographystyle{tmlr}\bibliography{references}
\appendix
\input{appendix/supplement}
\end{document}

%% file: sections/selected_start.tex
\section{A changing prior does not provide a free selected start}\label{a-changing-prior-does-not-provide-a-free-selected-start}

\subsection{The statement under examination}\label{the-statement-under-examination}

For each fixed \(\theta\), let \(W_t(\theta)=\exp P_t(\theta)\) be an adapted nonnegative supermartingale with \(\mathbb E W_1(\theta)\leq1\). The changing-prior extension in the current publisher version of \citet[Theorem 40, Equations 54--55]{chugg2023} permits switching to an \(\mathcal F_{t_i-1}\)-measurable prior \(\nu_i\) at time \(t_i\). It subtracts \(\mathrm{KL}(\rho\Vert\nu_i)\) from the unchanged accumulated \(P_t\). With \(s\) switches up to the current time, the displayed wealth boundary is \((s+1)(s+2)/\delta\), simultaneously over time and posteriors.

More explicitly, the source defines \(\mathfrak M_t(\theta)=P_t(\theta)-D_{\mathrm{KL}}(\rho\Vert\nu_i)\) between successive switches and asserts
\begin{equation}
\mathbb E_\rho[\mathfrak M_t(\theta)]\leq
\log\!\left(\frac{(t^\dagger+1)(t^\dagger+2)}{\delta}\right)
\quad\text{for all }t\geq1,\ \rho\in\mathcal M(\Theta),
\label{eq:source}
\end{equation}
with probability at least \(1-\delta\), where \(t^\dagger\) counts prior changes. We use its displayed accumulated \(P_t\); no reset or selected-start normalization is present in that statement.

Past measurability preserves the subsequent supermartingale property of a selected mixture. It does not give that mixture an expected start of at most one. The distinction is decisive even with one deterministic switch.

\subsection{Exact witness and unrestricted impossibility}\label{exact-witness-and-unrestricted-impossibility}

\begin{theorem}[Unpriced selected-start obstruction]
Under the printed assumptions above, the reported one-switch boundary is false. More generally, no finite boundary depending only on the number of switches and \(\delta\) controls all such unrestricted selected mixtures.
\end{theorem}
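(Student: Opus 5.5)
I will build an explicit finite family in which every fixed-parameter process is a correctly initialized positive martingale, but a prior selected from the first observation puts all its mass where accumulated wealth is already large. Take a finite parameter set \(\Theta=\{1,\dots,m\}\) and one i.i.d. uniform coordinate \(X_1\in\{1,\dots,m\}\) observed at time \(1\). Set \(W_1(\theta)=m\,\mathbf 1\{X_1=\theta\}+\epsilon\cdot(\text{correction})\). The correction keeps every value strictly positive while \(\mathbb E W_1(\theta)=1\); for example \(W_1(\theta)=(m-\epsilon(m-1))\mathbf 1\{X_1=\theta\}+\epsilon\mathbf 1\{X_1\neq\theta\}\). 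For \(t\geq2\) let \(W_t(\theta)=W_1(\theta)\), a trivial martingale. Then \(P_t=\log W_t\) satisfies the printed assumptions. Switch once, at \(t_1=2\), to the \(\mathcal F_1\)-measurable prior \(\nu_1=\delta_{X_1}\), the point mass on the realized coordinate. Take the posterior \(\rho=\nu_1\), so the KL term vanishes. On every outcome, \(\mathbb E_\rho[\mathfrak M_t]=P_t(X_1)=\log(m-\epsilon(m-1))\).

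The one-switch boundary in \eqref{eq:source} is \(\log(6/\delta)\). I will choose \(m\) with \(m-\epsilon(m-1)>6/\delta\), for instance \(m=\lceil 6/\delta\rceil+1\) and small \(\epsilon>0\). The inequality is then violated on every outcome, so the claimed event has probability zero rather than at least \(1-\delta\), and the one-switch statement is false. I will also check, to make the diagnosis explicit, that the selected mixture \(\int W_t\,d\nu_1\) is a martingale from time \(2\) on but has expected start \(m-\epsilon(m-1)\) rather than \(\le1\). This is the "unpaid starting wealth".

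For the general claim, suppose some finite function \(B(s,\delta)\) bounded the wealth with probability at least \(1-\delta\) for all unrestricted families. Fix \(s\) and \(\delta<1\), and run the same construction with \(m\) chosen so that \(\log(m-\epsilon(m-1))>\log B(s,\delta)\). With \(s\) switches, I make the first one the selecting switch and the others redundant switches to the same prior, or use \(s\)-fold products. The violation again holds with probability one, which contradicts the bound. Since \(m\) is unrestricted, no finite function of \((s,\delta)\) works.

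The main obstacle is matching the source's exact conventions. These include whether \(t_i\) measurability \(\mathcal F_{t_i-1}\) permits selection using \(X_1\) at \(t_1=2\), whether \(W\) must be a supermartingale from time \(0\) with \(W_0=1\) (then I start the process at time \(1\) with a nontrivial first step), and whether posteriors must be absolutely continuous with respect to the prior. I will check each against Theorem 40 and choose the timing so that the selected prior is admissible and \(\mathrm{KL}(\rho\Vert\nu_1)=0\) literally. Strict positivity, guaranteed by \(\epsilon\), avoids any issue with \(\log 0\).
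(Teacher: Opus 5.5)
Your proposal is correct and is essentially the paper's proof. Both use the same strictly positive family, taking value $m-(m-1)\varepsilon$ on the realized uniform coordinate and $\varepsilon$ elsewhere. Both make one switch to the point mass on that coordinate, take $\rho=\nu_1$ so the KL term vanishes, and let $m$ grow to beat any finite boundary. The paper only shifts time by one: it keeps $W_1=1$ deterministic with trivial $\mathcal F_1$ and switches at $t=3$ to an $\mathcal F_2$-measurable prior, which is exactly your fallback for a deterministic unit start.

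For the unrestricted claim, the one-switch case already suffices, as in the paper, so your extension to general $s$ is unnecessary. If you keep it, switch to genuinely different priors before the final $\delta_{X_1}$, in case an unchanged prior is not counted in $t^\dagger$.
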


\textbf{Proof.} Let \(\Theta=\{1,\ldots,m\}\), let \(U\) be uniform on \(\Theta\), let \(\mathcal F_1\) be trivial, and let \(\mathcal F_t=\sigma(U)\) for \(t\geq2\). For \(0<\varepsilon<1\), define

\[
W_1(\theta)=1,\qquad
W_t(\theta)=
\begin{cases}
m-(m-1)\varepsilon,&\theta=U,\\
\varepsilon,&\theta\neq U,
\end{cases}\quad t\geq2.
\]

For every fixed \(\theta\), the conditional mean at time two is one. All later values are constant. Thus each process is strictly positive, adapted and a unit-start martingale. Begin with the uniform prior and make the single deterministic switch at time three to \(\nu_1=\delta_U\), which is \(\mathcal F_2\)-measurable. Take \(\rho=\delta_U\). The KL term is zero and the selected accumulated wealth equals \(m-(m-1)\varepsilon\) on every path.

With \(m=256\), \(\varepsilon=1/1000\), and \(\delta=1/20\), that wealth is \(255.745\), exceeding the claimed boundary \(6/\delta=120\) with probability one. For any finite proposed one-switch boundary, increasing \(m\) produces the same contradiction. \(\square\)

\begin{figure}[t]
\centering\includegraphics[width=.80\linewidth]{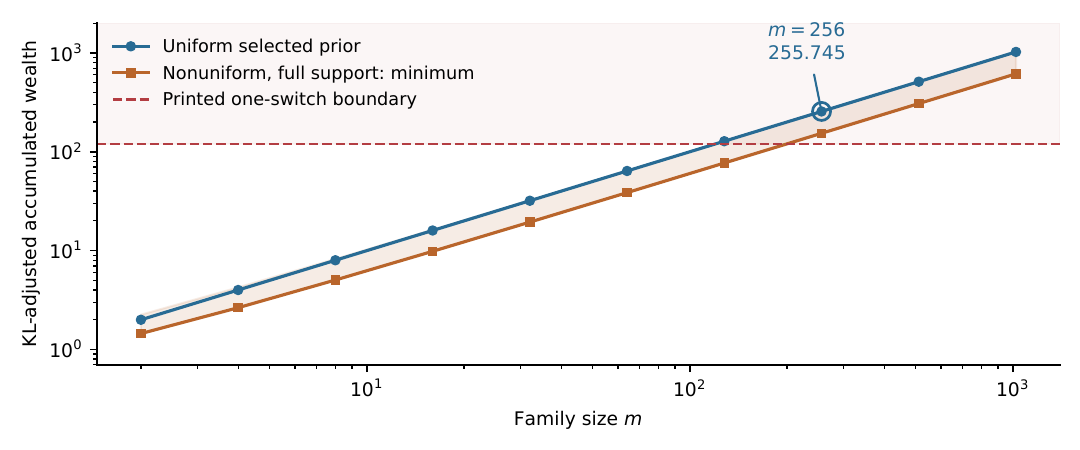}
\caption{Selected wealth grows with family size while the printed one-switch boundary stays fixed. The orange band spans the outcome-dependent full-support scores, and its lower curve gives their minimum. Shading above the boundary marks the violating region. The obstruction survives changes in prior support and outcome probability; the highlighted example uses $m=256$.}\label{fig:selected}
\end{figure}

The fixed-parameter optional-stopping step remains valid. The invalid inference transfers its unit expected start to a data-selected mixture. Here \(\mathbb E\int W_3\,d\nu_1=m-(m-1)\varepsilon\), although each fixed process has expectation one. A union bound over switch times cannot repair an unpriced start within a switch.

\subsection{Mature repairs and the precise consequence}\label{mature-repairs-and-the-precise-consequence}

The obstruction is unbounded even at one switch. A useful repair must price the selected start or impose additional structure on selection and accumulated wealth.

A density cap relative to a valid fixed prior pays for selection. Here the cap is \(m\), since the selected Dirac mass has density \(m\) relative to the uniform prior. \citet{dziugaite2018} provide density-based reasoning in data-dependent-prior analysis. \citet{rivasplata2020} retain the actual joint exponential moment for a stochastic prior kernel; the selected moment here is \(255.745\), rather than one. A third option discards selected past wealth and constructs a conditionally unit-start process from future increments.

These established remedies have different information and burden consequences. Fixed-prior PAC-Bayes and correctly priced data-dependent priors retain their original guarantees. The theorem isolates the accumulated-wealth changing-prior extension and rules out an unrestricted switch-count-only repair. The source versions and exact quantifiers appear in Appendix~\ref{app:proofs}.

\begin{theorem}[Corrected changing-prior bound with paid starts]\label{prop:paidstart}
Let $\Theta$ be finite, and let $W_t(\theta)>0$, $t\geq0$, be adapted supermartingales with $W_0(\theta)=1$. Let $0=T_0<T_1<T_2<\cdots$ be stopping times, allowing an uncreated suffix at infinity. At birth $T_i$, choose an $\mathcal F_{T_i}$-measurable prior $\nu_i$ and set
\[
 Q_{i,t}=\sum_{\theta}\nu_i(\theta)W_t(\theta),\quad
 S_i=Q_{i,T_i},\quad \alpha_i=\frac{\delta}{(i+1)(i+2)}.
\]
For $0<\delta<1$, with probability at least $1-\delta$, simultaneously for every created $i$, every $t\geq T_i$ and every posterior $\rho$,
\begin{equation}\label{eq:corrected}
 \mathbb E_\rho\log W_t(\theta)-D_{\mathrm{KL}}(\rho\Vert\nu_i)
 \leq\log\!\left(\frac{S_i(i+1)(i+2)}{\delta}\right).
\end{equation}
In particular, use prior $i$ from time $\tau_i=T_i+1$ until the next switch. Its start is observed at $\tau_i-1$; the switching decision is made before the next observation. For one birth and allocation $\alpha$, any deterministic $B\geq\mathbb E S$ instead gives the unconditional boundary $\log(B/\alpha)$.
\end{theorem}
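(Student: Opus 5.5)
The plan is to normalize each selected mixture by its observed start, apply Ville's inequality to the normalized process conditionally on $\mathcal F_{T_i}$, take a union bound over births, and finish with the Donsker--Varadhan variational inequality.

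\textbf{Step 1: a unit-start supermartingale for each birth.} Fix $i$ and work on the event $\{T_i<\infty\}$. Define
\[
M_{i,t}=Q_{i,t}/S_i \qquad\text{for } t\geq T_i.
\]
Since $\Theta$ is finite and each $W_t(\theta)>0$, we have $S_i>0$, and $M_{i,T_i}=1$. The prior $\nu_i$ is $\mathcal F_{T_i}$-measurable. So for $t\geq T_i$,
\[
\mathbb E[Q_{i,t+1}\mid\mathcal F_t]=\sum_\theta\nu_i(\theta)\,\mathbb E[W_{t+1}(\theta)\mid\mathcal F_t]\leq Q_{i,t}.
\]
Pulling $\nu_i$ out of the conditional expectation is legitimate because the sum is finite and $\nu_i$ is measurable at the earlier time $T_i\leq t$. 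Dividing by the $\mathcal F_{T_i}$-measurable constant $S_i$ shows that $(M_{i,t})_{t\geq T_i}$ is a nonnegative supermartingale in the shifted filtration $(\mathcal F_t)_{t\geq T_i}$, with conditional start exactly one. This is the point where the start is ``paid''. The selected-start obstruction theorem shows that without dividing by $S_i$ the expected start is unbounded; after normalization it equals one by construction.

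\textbf{Step 2: conditional Ville and a union bound.} Apply Ville's maximal inequality conditionally on $\mathcal F_{T_i}$:
\[
\Pr\!\left(\exists t\geq T_i:\ M_{i,t}\geq 1/\alpha_i \,\middle|\, \mathcal F_{T_i}\right)\leq\alpha_i \qquad\text{on }\{T_i<\infty\}.
\]
This is the main technical point. It needs $T_i$ to be a stopping time, so that the post-$T_i$ process is a supermartingale under the optional-sampling/strong Markov-free shift, and it needs Ville applied to an almost surely finite start. I would prove it by applying ordinary Ville to the process $\mathbf 1_A M_{i,T_i\vee t}$ for each $A\in\mathcal F_{T_i}$. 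Integrating the conditional bound gives an unconditional bound of $\alpha_i$. Summing over $i\geq 0$ gives
\[
\sum_{i\geq 0}\alpha_i=\delta\sum_{i\geq 0}\left(\frac1{i+1}-\frac1{i+2}\right)=\delta.
\]
Uncreated indices contribute nothing. Hence, with probability at least $1-\delta$, for every created $i$ and every $t\geq T_i$,
\[
Q_{i,t}<S_i(i+1)(i+2)/\delta.
\]
This bound also yields the per-owner funding contract of Equation~\ref{eq:funding}.

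\textbf{Step 3: Donsker--Varadhan.} For any $\rho$, take $\rho\ll\nu_i$; otherwise the KL term is infinite and the bound is trivial. The variational inequality gives
\[
\mathbb E_\rho\log W_t(\theta)-D_{\mathrm{KL}}(\rho\Vert\nu_i)\leq\log\sum_\theta\nu_i(\theta)W_t(\theta)=\log Q_{i,t}.
\]
Combining this with Step 2 yields Equation~\ref{eq:corrected} simultaneously for all $i$, $t$ and $\rho$.

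\textbf{Step 4: the remaining claims.} The usage remark follows from Step 3: prior $i$ is used only at times $t\geq\tau_i>T_i$, which lie inside the range already covered. For the unconditional variant, take a single birth and work with $Q_t$ directly. Optional sampling at the stopping time $T$ gives $\mathbb E Q_T=\mathbb E S\leq B$. The process $(Q_t)_{t\geq T}$ is a supermartingale, so Ville started from the random initial value gives
\[
\Pr(\exists t:\ Q_t\geq B/\alpha)\leq \alpha\,\mathbb E S/B\leq\alpha.
\]
Here $Q_t$ is a supermartingale from $T$ onward and its starting value has mean at most $B$, so the conditional-start step is replaced by a bound on the expected start.
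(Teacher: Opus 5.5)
Your proof is correct and follows the paper's argument essentially step for step. You normalize the birth-measurable mixture by its observed start $S_i$, apply conditional Ville given $\mathcal F_{T_i}$, pass to all posteriors by finite change of measure, and take a union bound using $\sum_{i\geq0}\alpha_i=\delta$; the unconditional variant comes from Ville at the deterministic threshold $B/\alpha$. Your one addition is an explicit device, applying ordinary Ville to $\mathbf{1}_A Q_{i,T_i\vee t}/S_i$ for each $A\in\mathcal F_{T_i}$, where the paper simply says ``conditional Ville''. This process is a supermartingale with respect to $(\mathcal F_{T_i\vee t})_{t\geq0}$, not $(\mathcal F_t)$, which matches the shifted filtration you invoke in Step 1.
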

Conditionally on $\mathcal F_{T_i}$, $Q_{i,t}/S_i$ has unit start. Ville's inequality and change of measure give the per-owner event; $\sum_{i\geq0}\alpha_i=\delta$ funds all switches. Appendix~\ref{app:proofs} proves the random-time statement. For an ordinary stopping switch that can react to its current observation, instead take the birth $T_i$ at the switch itself and pay $S_i=Q_{i,T_i}$; Equation~\ref{eq:corrected} already holds from that time. Using the earlier $Q_{i,\tau_i-1}$ requires $\tau_i-1$ to be a stopping time and a prior measurable there. Setting $W_0=1$ includes the initial phase. The original unit-expected-start family can be extended by this initial constant with trivial $\mathcal F_0$.

\begin{proposition}[Sharp finite-family starting-moment cost]\label{prop:selectioncost}
For $m=|\Theta|$ unit-start nonnegative supermartingales, an almost surely finite stopping time $T$ and any $\mathcal F_T$-measurable prior, $\mathbb E S\leq m$. The supremum of this expected selected start over strictly positive families is $m$.
\end{proposition}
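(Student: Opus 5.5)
The plan has two parts: an upper bound for arbitrary selection, then a lower bound from the family used in the unpriced selected-start theorem.

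\emph{Upper bound.} Any prior $\nu$ on the finite set $\Theta$ satisfies $\nu(\theta)\leq1$. Hence, pathwise,
\[
S=\sum_\theta\nu(\theta)W_T(\theta)\leq\sum_\theta W_T(\theta).
\]
This removes the dependence on the selection, because the right side no longer involves $\nu$. It then suffices to show $\mathbb E W_T(\theta)\leq1$ for each fixed $\theta$.

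For a nonnegative supermartingale with unit start, this follows from optional stopping at the bounded times $T\wedge n$, which gives $\mathbb E W_{T\wedge n}(\theta)\leq \mathbb E W_0(\theta)=1$. Since $T<\infty$ almost surely, $W_{T\wedge n}(\theta)\to W_T(\theta)$, and Fatou's lemma yields $\mathbb E W_T(\theta)\leq1$. Summing over the $m$ parameters gives $\mathbb E S\leq m$.

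This is the step needing the most care, because nonnegative supermartingales need not be uniformly integrable. Truncating at $T\wedge n$ and applying Fatou, rather than invoking optional stopping at $T$ directly, avoids that assumption. Measurability of $\nu$ plays no role in the upper bound, since the pathwise inequality already dominates every selection.

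\emph{Sharpness.} I would reuse the exact witness from the unpriced selected-start theorem. Take $W_0=1$, trivial $\mathcal F_0$, $U$ uniform on $\Theta$ revealed at time $1$, and
\[
W_t(\theta)=m-(m-1)\varepsilon \ \text{ if } \theta=U,\qquad W_t(\theta)=\varepsilon \ \text{ otherwise, for } t\geq1.
\]
Each process is strictly positive and is a martingale, because its conditional mean at time one is
\[
\frac{m-(m-1)\varepsilon}{m}+\frac{(m-1)\varepsilon}{m}=1.
\]
Take $T=1$ and the $\mathcal F_1$-measurable prior $\nu=\delta_U$. Then $S=m-(m-1)\varepsilon$ deterministically, so $\mathbb E S=m-(m-1)\varepsilon$.

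Letting $\varepsilon\downarrow0$ drives $\mathbb E S$ to $m$, so the supremum over strictly positive families is at least $m$, and by the upper bound it equals $m$. Strict positivity forces $\varepsilon>0$, so the supremum is approached but not attained within this family. This matches the proposition's statement about the supremum. It also identifies the density cap $m$ from the mature-repairs discussion as exactly the price of selection.
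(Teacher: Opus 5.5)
Your proof is correct and follows the paper's argument: the pathwise domination $S\leq\sum_\theta W_T(\theta)$, nonnegative optional sampling for each fixed $\theta$, and the uniform witness from the selected-start theorem with $\varepsilon\downarrow0$. You make the optional-sampling step explicit through $T\wedge n$ and Fatou, where the paper simply cites optional sampling for almost surely finite $T$; the remaining differences are cosmetic (you bound $\nu(\theta)\leq1$ directly rather than passing through $\max_\theta W_T(\theta)$, and you index the witness with selection time $T=1$ instead of $T=2$).
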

Indeed, $S\leq\max_\theta W_T(\theta)\leq\sum_\theta W_T(\theta)$ and optional sampling bounds each expectation by one. The witness attains $m(1-\varepsilon)+\varepsilon\uparrow m$. Thus $m$ is the sharp universal \emph{starting-moment} price for unstructured finite-family selection. For its selected Dirac posterior, the density-cap bound and fixed uniform prior with $\log m$ KL cost have exactly the same normalized slack (Table~\ref{tab:remedies}); this identity does not compare arbitrary posterior contracts or establish an optimal tail boundary.

\paragraph{An ordinary Bernoulli example.}
Under the IID Bernoulli null $p_0=1/2$, take alternatives $q\in\{.3,.7\}$ and likelihood-ratio martingales $W_t(q)=(2q)^{s_t}[2(1-q)]^{t-s_t}$, where $s_t$ counts successes. At the deterministic birth $T=40$, a prespecified rule puts prior mass $.6$ on the more likely alternative and $.4$ on the other. Compare the paid mixture $E_{\rm paid}=Q_{T+n}/S$, parameterwise future rebasing with this same selected prior, and a fixed uniform prior over the entire prefix and suffix. Each experiment receives $\alpha=.05/6$; an endpoint value above $120$ rejects the null. Exact values on three realised count paths are
\input{tables/bernoulli_usefulness}
Here $a/b$ denotes successes/observations. The modest prefix gives $S\approx.544$ in the first two rows: retaining its relative alternative weights helps on an aligned suffix but hurts on an opposed one. The third prefix gives $S\approx16.127$, and its already-funded fixed-prior evidence is most useful. These count paths illustrate the stated likelihood model, rather than frequencies of a deployment advantage.

The information retained by the paid start has a precise interpretation. With $\widetilde\nu(q)=\nu(q)W_T(q)/S$, one has $Q_t/S=\sum_q\widetilde\nu(q)W_t(q)/W_T(q)$. Thus rebasing with the wealth-tilted prior reproduces the paid mixture exactly. The correction preserves relative past evidence; it does not supply an advantage over that appropriately initialized restart. Appendix~\ref{app:proofs} gives the KL form and the common-allocation comparison contract.

\citet{recursive2024} give a constructive neighbour: Recursive PAC-Bayes preserves earlier confidence through a recursive bounded-loss decomposition. Each new excess-loss term uses a prior independent of its evaluation suffix, and a separately valid earlier risk bound is carried forward. This supplies additional sequential structure rather than reassigning arbitrary accumulated martingale wealth to a selected prior.

The corrected theorem supplies a native event for the construction below: $E_i=\{\sup_{t\geq T_i}Q_{i,t}/S_i\leq1/\alpha_i\}$ has birth $\mathcal F_{T_i}$, allocation $\alpha_i$ and the selected prior/process binding. Its statistic is $f_i=\sup_{t\geq T_i}Q_{i,t}/S_i$, with native interval $[0,1/\alpha_i]$. It satisfies Equation~\ref{eq:funding} and enters Proposition~\ref{prop:reuse} directly. The owner composition applies to trajectory events; the later implementation specializes their constraint representation to bounded linear outcomes. Posterior or claim changes can consume this event; a fresh selected start creates a new owner.

The operational consequence is that a guarantee must remain attached to the experiment that paid for it. The selected-start witness changes the statistical owner while retaining accumulated evidence; the future-edit witness below changes the support through which that evidence reaches a claim. Adaptive Assurance preserves both kinds of identity: the native experiment's funding and bindings, and the dependencies that justify its current use.

%% file: tables/bernoulli_usefulness.tex
\[
(E_{\rm paid},E_{\rm rebase},E_{\rm fixed})=
\begin{cases}
(139.24,83.61,63.18),&24/40\text{ then }42/60,\\
(1.25,657.98,0.85),&24/40\text{ then }27/90,\\
(26.88,16.13,361.22),&28/40\text{ then }28/40,\\
\end{cases}
\]

%% file: tables/receipt.tex
\begin{table}[t]
\centering\small
\begin{tabularx}{\linewidth}{l X l l}\toprule
Edit & Evidence still valid & Range & Decision\\\midrule
Clipping revision & Old mean event and 32 relation-derived target points & [0.1866,0.8667] & OPEN\\
68 direct values added & 32 derived and 68 independent target points & [0.5956,0.5956] & REFUTE\\
Relation revoked & Old event remains; 68 independent target points & [0.4089,0.7290] & OPEN\\
Selective recovery & All 100 independent target points & [0.5956,0.5956] & REFUTE\\
Unqualified future & Recorded-frame evidence has no future binding & [0.0000,1.0000] & OPEN\\
\bottomrule\end{tabularx}
\caption{C-MAPSS claim: revised clipped mean $\leq.5$. Source allocation remains .02 throughout. Ranges are displayed outward to four decimal places; receipts retain exact rationals.}\label{tab:receipt}
\end{table}

%% file: tables/native_components.tex
\begin{table}[t]
\centering
\begin{minipage}[t]{.44\linewidth}
\centering\small
\setlength{\tabcolsep}{3pt}
\begin{tabular}{lrrr}\toprule
Study & Decisions & Qualified & Contrad.\\\midrule
Heart & 720 & 338 & 0\\
C-MAPSS & 60 & 33 & 0\\
Seeds & 36 & 18 & 0\\
Air & 36 & 16 & 0\\
\bottomrule\end{tabular}
\caption{\textbf{(a) Native lifecycle outcomes.} Three orders share each recorded frame. Contradictions are checked against recorded outcomes or permitted missing completions.}\label{tab:native}
\end{minipage}\hfill
\begin{minipage}[t]{.54\linewidth}
\centering\small
\setlength{\tabcolsep}{3pt}
\begin{tabular}{lrrrr}\toprule
Frame & \shortstack{Copy\\/5} & \shortstack{Stale\\points} & \shortstack{Unsup.\\/5} & \shortstack{Selective\\/ reset}\\\midrule
Heart & 1/5 & 303 & 5/5 & 202/303\\
C-MAPSS & 1/5 & 100 & 5/5 & 66/100\\
Seeds & 1/5 & 210 & 5/5 & 140/210\\
Air & 1/5 & 7,674 & 4/5 & 5,116/7,674\\
\bottomrule\end{tabular}
\caption{\textbf{(b) Component interventions.} Copy errors contradict outcomes; stale points lose relation support; unsupported decisions lack future scope. Recovery counts inspections restoring the same available information. Five-threshold grids are diagnostics.}\label{tab:ablation}
\end{minipage}
\end{table}

%% file: tables/lifecycle.tex
\begin{table}[t]
\centering\small
\begin{tabular}{llrrr}\toprule
Frame / retention & Acquisition & Qualified / queries & New exposures & Seconds\\\midrule
Adult / summary & endpoint guided & 60/72 & 10112 & 4.29\\
Adult / summary & uniform & 60/72 & 11598 & 4.08\\
Adult / summary & fresh & 64/72 & 3584 & 7.89\\
Adult / full & endpoint guided & 64/72 & 4352 & 4.52\\
Adult / full & uniform & 58/72 & 11563 & 5.30\\
Adult / full & fresh & 64/72 & 3712 & 5.80\\
Adult / no new access & endpoint guided & 26/72 & 0 & 0.04\\
Adult / no new access & uniform & 26/72 & 0 & 0.04\\
Adult / no new access & fresh & 26/72 & 0 & 0.04\\
BPI2012 / summary & endpoint guided & 64/72 & 4155 & 0.22\\
BPI2012 / summary & uniform & 64/72 & 4637 & 0.40\\
BPI2012 / summary & fresh & 64/72 & 2624 & 0.15\\
BPI2012 / full & endpoint guided & 64/72 & 2688 & 0.10\\
BPI2012 / full & uniform & 64/72 & 4176 & 0.13\\
BPI2012 / full & fresh & 64/72 & 2608 & 0.09\\
BPI2012 / no new access & endpoint guided & 7/72 & 0 & 0.03\\
BPI2012 / no new access & uniform & 7/72 & 0 & 0.03\\
BPI2012 / no new access & fresh & 7/72 & 0 & 0.03\\
\bottomrule\end{tabular}
\caption{Matched lifecycle: pooled qualified count over the configured orders; medians of target exposures and per-trajectory wall time. Exposure totals may revisit subjects. Source audits, training and computational-reference construction are additional costs. No resource exchange rate is assumed.}\label{tab:lifecycle}
\end{table}

%% file: appendix/supplement.tex
\section{Source contract and complete selected-start derivation}\label{app:proofs}
\subsection{What conditioning at a switch preserves}
The publisher statement is Theorem~40 of \citet{chugg2023}, with the accumulated process in Equations~54--55. Its current arXiv version 5 retains the relevant statement as Theorem~6.5. For every fixed parameter, $\exp P_t(\theta)$ is adapted and has a nonnegative supermartingale envelope with unit expected start. The changing prior is measurable with respect to the history immediately before its permitted switch. The posterior is arbitrary, and the statement is simultaneous over time and posterior. The KL term uses the current prior; the accumulated $P_t$ is not discarded at a switch.

\paragraph{Publisher statement, formulas and quantifiers.}
Theorem~40's hypotheses are $P(\theta)=(P_t(\theta))_{t\geq1}$, $\exp P(\theta)$ an $(\mathcal F_t)$-supermartingale, and $\mathbb E\exp P_1(\theta)\leq1$ for each fixed $\theta$. Section~6.5 permits stopping switch times $\mathcal T=\{t_1,t_2,\ldots\}$ and requires $\nu_i$ to be $\mathcal F_{t_i-1}$-measurable. Its displayed definition (54), transcribed with the source's process and index, is
\[
 \mathfrak M_t(\theta)=
 \begin{cases}
 P_t(\theta)-D_{\mathrm{KL}}(\rho\Vert\nu_0),&1\leq t<t_1,\\
 P_t(\theta)-D_{\mathrm{KL}}(\rho\Vert\nu_1),&t_1\leq t<t_2,\\
 \cdots&\cdots .
 \end{cases}
\]
The theorem states: ``with probability at least $1-\delta$, for all $t\geq1$ and $\rho\in\mathcal M(\Theta)$,'' followed by its displayed (55),
\[
 \mathbb E_\rho[\mathfrak M_t(\theta)]\leq
 \log\!\left(\frac{(t^\dagger+1)(t^\dagger+2)}{\delta}\right).
\]
Here $t^\dagger$ counts changes through time $t$. This quantifier wording and the formulas are reproduced from the publisher PDF, p.39; the surrounding assumptions are restated explicitly. The proof, continued on p.40, applies a unit-start maximal inequality to each selected mixture and then sums switch allocations. Neither definition nor proof resets $P_t$ or divides by its selected starting value. The inspected publisher PDF has SHA256 \texttt{4693d90691d77fb2905f41998885b8f7863966eb30cff9f3064fa8ffc258d5b8}; it is available at \url{https://jmlr.org/papers/volume24/23-0401/23-0401.pdf}. The January 2024 arXiv v5 has the same relevant equations as Theorem~6.5 (pp.36--37). The February 2023 v1 predates this changing-prior section, so it is not the version refuted. The counterexample addresses the accumulated-wealth extension \emph{as stated}; a future-only normalized experiment is a different valid contract.

Condition on the history before a switch. For each fixed parameter, subsequent supermartingale increments remain valid, and integrating with a prior measurable at that history preserves the future inequality. This proves a future supermartingale statement. It does not prove that the selected mixture's starting expectation is at most one. The source's fixed-prior mixing argument uses a fixed measure to exchange expectation and integration; a selected prior kernel requires their joint expectation. Ville's maximal inequality correctly prices this starting expectation. Applying a unit-start consequence at each selected switch requires the missing moment before any switch-index union bound is applied.

Our witness is finite, measurable, strictly positive and integrable, with a deterministic switch. It therefore avoids explanations based on zero capital, infinite support, optional-stopping regularity or numerical approximation. The positive fixed-prior theorem, change-of-measure inequality and correctly initialized Ville bound remain unaffected.

\subsection{Uniform witness and unrestricted quantifiers}
Let $U$ be uniform on $\{1,\ldots,m\}$, $\mathcal F_1$ trivial and $\mathcal F_t=\sigma(U)$ for $t\geq 2$. Define
\[
W_1(j)=1,\qquad W_t(j)=\varepsilon+m(1-\varepsilon)\mathbf 1\{U=j\},\quad t\geq 2.
\]
For each fixed $j$, $\mathbb E W_2(j)=\varepsilon+(1-\varepsilon)=1$ and later values are constant. Every $W_t(j)$ is a positive unit-start martingale. Put $P_t(j)=\log W_t(j)$. A uniform initial prior is independent of data. The switch at time 3 to $\nu_1=\delta_U$ is $\mathcal F_2$-measurable. Choose posterior $\rho=\delta_U$; its KL to $\nu_1$ is zero. On every outcome the selected wealth is $m-(m-1)\varepsilon$. At $m=256$, $\varepsilon=1/1000$, $\delta=1/20$, it equals $51149/200=255.745$, exceeding the printed one-switch $6/\delta=120$ boundary.

For any finite function $B(1,\delta)$ independent of family size, fix $\varepsilon\in(0,1)$ and choose an integer $m$ with $m(1-\varepsilon)+\varepsilon>B(1,\delta)$. This gives an admissible family violating that boundary on every outcome. The quantifier ranges over unrestricted finite families. It does not exclude a bound depending on cardinality, density, dependence, information or joint moment, nor a stronger class of initialization assumptions.

\subsection{Nonuniform and full-support adversarial extension}
For any probabilities $p_j>0$, $\sum_jp_j=1$, let $\Pr(U=j)=p_j$ and replace the second-time capital by
\[
 W_t(j)=\varepsilon+\frac{1-\varepsilon}{p_j}\mathbf 1\{U=j\},\quad t\geq 2.
\]
Each fixed column still has expectation one. Its selected start is $\varepsilon+(1-\varepsilon)/p_U$, with joint mean $\varepsilon+m(1-\varepsilon)$. For even $m$, choose alternating $p_j=3/(4 m)$ and $5/(4 m)$. With $m=256$, $\varepsilon=1/1000$ and $\delta=1/20$, the smallest selected wealth is 204.5962, exceeding 120 on every outcome.

The result also survives a full-support selected prior $\nu=(1-\eta)\delta_U+\eta p$, $\eta=1/4$. With $\rho=\delta_U$, exponentiating the PAC-Bayes score gives
\[
 \exp\{P_t(U)-D_{\mathrm{KL}}(\delta_U\Vert\nu)\}=W_t(U)\nu(U).
\]
In the displayed uniform and nonuniform families this exceeds 120 for every outcome. A restriction to full-support priors therefore does not by itself fix the missing selection cost.

\subsection{Valid alternatives and their original owners}
For a fixed prior $p$, the mixture $\int W_t\,dp$ remains a unit-start supermartingale. If a selected prior is pathwise dominated by $C p$ for deterministic $C$, its old-wealth mixture is at most $C\int W_t\,dp$. The ordinary maximal inequality then gives the appropriately enlarged boundary. In the uniform Dirac witness $C=m$. Density-based data-dependent prior methods \citep{dziugaite2018} are established positive machinery, not new results introduced here.

\citet{rivasplata2020} allow a stochastic prior kernel while retaining a joint exponential moment. For this witness that moment is $\varepsilon+m(1-\varepsilon)$, not one. The dependently selected prior is part of the experiment being priced. Theorem 16 of \citet{morepac2024} union-bounds events already valid at each time; its randomized-subsample Theorem 29 permits a prior seeing a supersample but not independent subsampling indices. Neither licenses the omitted initial moment in the printed extension.

Future-only rebasing is different from retaining accumulated selected wealth. For a deterministic switch $\tau$, strict positivity allows
\[
 R_t(j)=W_t(j)/W_{\tau-1}(j),\quad t\geq\tau-1.
\]
For each parameter, $R_{\tau-1}=1$ and $\mathbb E(R_t\mid\mathcal F_{t-1})\leq R_{t-1}$. A measurable prior chosen at $\mathcal F_{\tau-1}$ gives a conditionally unit-start mixture. New allocated maximal-inequality guarantees can therefore fund this future process. The same construction is valid at the pre-observation stopping births proved below, with conditional integrability supplied by the normalized nonnegative process.

The result is version-specific: publisher Theorem 40 and current preprint Theorem 6.5 retain accumulated wealth at a prior switch. The result concerns this accumulated-wealth statement and its unrestricted switch-count contract.

\subsection{Paid selected-start proof and remedy comparison}
For Theorem~\ref{prop:paidstart}, fix a created birth $T_i$ and condition on $\mathcal F_{T_i}$. Optional sampling for nonnegative supermartingales gives
\[
 \mathbb E[W_{T_i+n+1}(\theta)\mid\mathcal F_{T_i+n}]
 \leq W_{T_i+n}(\theta),\qquad n\geq0.
\]
The weights $\nu_i$ are measurable at birth, so $Q_{i,T_i+n}$ is a supermartingale on the shifted filtration. Strict positivity gives $S_i>0$. The normalized process $Q_{i,T_i+n}/S_i$ starts at one and is conditionally integrable. Conditional Ville bounds the probability of crossing $1/\alpha_i$ by $\alpha_i$. Finite change of measure gives
\[
 \mathbb E_\rho\log W_t-D_{\mathrm{KL}}(\rho\Vert\nu_i)\leq\log Q_{i,t}
\]
for all posteriors simultaneously; infinite KL gives a vacuous score. Consequently $\Pr(E_i^c\mid\mathcal F_{T_i})\leq\alpha_i$ for $E_i=\{\sup_{t\geq T_i}Q_{i,t}\leq S_i/\alpha_i\}$. For a birth that is never created set its failure indicator to zero; the conditional argument applies on each finite-birth event. The countable union bound, or Proposition~\ref{prop:reuse}, now uses $\sum_i\alpha_i=\delta$. At each active phase the index is the number of prior changes, reproducing the switch-count factor with the previously missing $S_i$.

No future selection enters this conditional argument: $T_i=\tau_i-1$ is a stopping time, the prior is measurable there, and use begins at $\tau_i$. Ordinary stopping of $\tau_i$ alone does not guarantee this earlier-birth property. If the switching decision observes time $\tau_i$ first, use birth $T_i=\tau_i$ and the observed $Q_{i,\tau_i}$; its future event cannot retroactively fund earlier observations. For an almost surely finite birth, $S_i\leq\sum_\theta W_{T_i}(\theta)$ is integrable with expectation at most $m$. The unconditional version instead applies Ville with deterministic threshold $B/\alpha$ when $B\geq\mathbb E S$.

For Proposition~\ref{prop:selectioncost}, nonnegative optional sampling yields $\mathbb E W_T(\theta)\leq W_0(\theta)=1$, even for an unbounded almost surely finite $T$. Summing proves the upper bound. The positive witness at deterministic $T=2$ and $\nu=\delta_U$ approaches $m$ as $\varepsilon\downarrow0$, proving sharpness of the expected-start constant. In the Dirac witness the two safe comparisons are algebraically identical: $W_t(U)/m\leq1/\alpha$ under a fixed uniform prior and $W_t(U)\leq m/\alpha$ under its density cap. Their score-to-boundary ratios agree; general posteriors and tail-optimality are separate questions.

\begin{table}[t]\centering\small
\begin{tabularx}{\linewidth}{l r r X}\toprule
Construction & Witness score & Boundary & Paid quantity / retained information\\\midrule
Printed one-switch extension &255.745&120&Switch count; accumulated selected wealth\\
Fixed uniform prior &.999004&20&Original prior; score includes $\log m$ KL cost\\
Density cap &255.745&5120&$C=m=256$; accumulated wealth retained\\
Joint starting moment &255.745&5114.9&$B=255.745$; selected dependence retained\\
Observed-start accounting &255.745&5114.9&$S=255.745$; accumulated score explicitly priced\\
Future-only rebasing &1&20&Unit future start; old wealth removed from new score\\
Recursive PAC-Bayes &\multicolumn{2}{c}{Different risk contract}&Earlier risk bound plus independent-suffix excess-loss bounds\\\bottomrule
\end{tabularx}
\caption{Constructive remedies on the uniform positive witness, $m=256$, $\varepsilon=.001$, $\delta=.05$. Original-process scores are $\exp\{\mathbb E_\rho\log W_t-D_{\mathrm{KL}}(\rho\Vert\nu)\}$ for $\rho=\delta_U$; the future-only row instead uses the rebased process. Valid alternatives allocate $\alpha=.05$ to one experiment; the printed comparison uses its $6/\delta$ switch boundary. Recursive PAC-Bayes has the distinct IID bounded-loss and sample-partition contract of Theorem~5 \citep{recursive2024}. The rows show what each contract pays for, rather than rank bounds under a common experiment.}\label{tab:remedies}\end{table}

Recursive PAC-Bayes carries $B_t=E_t+\gamma_tB_{t-1}$ through disjoint training increments. The prior for each excess-loss term and $\gamma_t$ are chosen before that term's evaluation suffix; earlier confidence retains its own valid bound. This constructive contract and the paid-start theorem express the same ownership requirement through different inferential objects.

\paragraph{Bernoulli comparison and tilted rebasing.}
The three numerical paths in Section~\ref{a-changing-prior-does-not-provide-a-free-selected-start} use a known IID Bernoulli null $p_0=1/2$ and positive likelihood ratios for $q=.3,.7$. At each step the null conditional mean of the likelihood increment is one. Equivalently, for $\widehat p_t=s_t/t$ and $t>0$,
\[
 \log W_t(q)=t\{\operatorname{kl}(\widehat p_t\Vert1/2)
                    -\operatorname{kl}(\widehat p_t\Vert q)\}.
\]
The fixed prior $p(q)=1/2$ is born before the prefix. At $T=40$ the selected rule chooses $\nu$ from the prefix likelihoods, breaking a tie towards $.7$. Its paid experiment and its same-prior rebased counterpart are born at that history. With suffix ratios $R_{T+n}(q)=W_{T+n}(q)/W_T(q)$, their endpoint values are respectively
\[
 E_{\rm paid}=\frac{\sum_q\nu(q)W_T(q)R_{T+n}(q)}{S},\qquad
 E_{\rm rebase}=\sum_q\nu(q)R_{T+n}(q),\qquad
 E_{\rm fixed}=\sum_qp(q)W_T(q)R_{T+n}(q).
\]
All receive the same per-experiment allocation $1/120$. They are alternative native experiments, not a deployment rule selecting the most favorable endpoint without further allocation. All counts, likelihoods and prior rules are available to every method; the difference is which birth and retained weighting fund the guarantee. The three prefix/suffix pairs are fixed before computation, and exact rational arithmetic verifies the displayed scores. They do not estimate average power or first-crossing time.

For any positive finite family, $\widetilde\nu(\theta)=\nu(\theta)W_T(\theta)/S$ is a birth-measurable prior with the same support as $\nu$. For every posterior of finite KL,
\[
 \mathbb E_\rho\log W_t-D_{\mathrm{KL}}(\rho\Vert\nu)-\log S
 =\mathbb E_\rho\log\frac{W_t}{W_T}
       -D_{\mathrm{KL}}(\rho\Vert\widetilde\nu).
\]
This is an algebraic translation between the paid-start score and future rebasing with the tilted prior. With a selected Dirac prior the two coincide even without tilting. A numerical gain relative to rebasing with unchanged $\nu$ therefore concerns retained between-alternative weights, not free use of past likelihood or a stronger maximal inequality.

\section{Conditional projection and exact verification}\label{app:projection}
\subsection{Countable event creation}
Enumerate possible owner births by $j$. If an owner is not created set its allocation and failure indicator to zero. Taking conditional expectation at its birth history gives $\Pr(E_j^c)\leq\mathbb E\alpha_j$. Monotone convergence and a union bound yield
\[
 \Pr(\exists j:E_j^c)\leq\sum_j\mathbb E\alpha_j
 =\mathbb E\sum_j\alpha_j\leq\delta.
\]
On the complementary event every active native row contains the bound truth. Truthful points and correct qualified relations also contain it. Deleting a constraint only enlarges the compatible set. A data-dependent consumer has its true value in its outer image, so the threshold decision is correct. Budget paths belong to one trajectory, not the pooled collection of counterfactual methods and domains.

The assertion requires more than an owner identifier. An event must actually satisfy its stated conditional guarantee for the original statistic and outcome binding. A string naming a birth time does not prove sampling honesty. Wrong external truth, unqualified scope changes and forged provenance are excluded by the premises, not solved by this theorem.

\subsection{Rational weak duality with residuals}
For $Az\leq b$ and $\ell\leq z\leq u$, any $\lambda\geq 0$ gives
\[
 d^\top z=\lambda^\top Az+(d-A^\top\lambda)^\top z
 \leq\lambda^\top b+\sum_r\max\{r_r\ell_r,r_ru_r\}.
\]
Numerical dual proposals are converted to rationals with denominator at most $10^8$. The residual is retained in full; it is not rounded away or presumed zero. Applying the inequality to both signs gives an outer range even for nonoptimal multipliers. Zero multipliers give the box range. The replay rejects inactive-row references, negative multipliers, mismatched native statistics/bindings, inconsistent endpoints and unfunded owners. It verifies represented consequences, not provenance authenticity.

For integer binary sums, outward integer rounding preserves validity. Generic continuous projection may be wider than exact categorical or integer projection. That can reduce coverage, but does not justify unsafe narrowing. Infeasible numerical reports alone do not produce a contradiction certificate.

\subsection{Binary compression and edit limits}
Partition unknown subjects by their tuple of coefficients in all active native events and the current claim. Replacing a cell by its count preserves every linear statistic. Every subject assignment induces a feasible cell-count assignment, and every integer count in $[0,n_{cell}]$ is realized by that many ones. Thus feasible objective sets coincide. Access, relation or point-provenance restrictions distinguishing subjects require refinement.

This current-range sufficiency is not a future-edit sufficiency theorem. Grouping must be refined when a future operation distinguishes origin, availability, bindings or premises. The following construction makes the preserved language explicit.

\subsection{Future-response sufficiency and a restricted memory bound}\label{app:editlanguage}
Let $q(h)$ be the required observable response of a retained history, and let $u$ range over a declared operation language. Use a common operation alphabet; an unsupported operation leaves the state unchanged and emits a distinguished refusal response. Define
\[
 h_1\sim_{\mathcal U}h_2
 \quad\Longleftrightarrow\quad
 q(u h_1)=q(u h_2)\quad\text{for every finite declared operation sequence }u.
\]
A deterministic representation from which those responses can be recovered must distinguish inequivalent histories. Equivalence is stable under prefixing a further operation, so the quotient supports a well-defined transition and response. This is a standard response-congruence construction. It asserts neither a computable minimal quotient in an unrestricted language nor low construction cost. Stochastic access additionally requires preservation of its response law and cost, not just a present deterministic answer.

For the two-coordinate witness in Proposition~\ref{prop:edit}, both original feasible sets are the singleton $(0,0)$. After revoking $T_x$, the first feasible set is $\{(a,a):a\in[0,1]\}$ and the second remains the singleton. Their target intervals are $[0,1]$ and $[0,0]$. This distinguishes evidence origin even when no present-outcome information has been compressed.

For a restricted quantitative example, take one original fixed inventory of $N$ truthful IID Bernoulli outcomes, bound to its original scope and funded at $\alpha\in(0,1)$. The one-sided fixed-$N$ Clopper--Pearson upper endpoint $U_N(k;\alpha)$ is strictly increasing in $k=0,\ldots,N$, with $U_N(N;\alpha)=1$ \citep{clopper1934}. For $k<N$, the endpoint solves $\Pr_p\{\mathrm{Bin}(N,p)\leq k\}=\alpha$. The tail is strictly decreasing in interior $p$ and strictly increasing in $k$, so its unique root increases with $k$; the final endpoint is one. Each retained cell is $*,0,1$. After arbitrary outcome-dependent erasure, conservative completion uses
\[
 U_N(k_1+k_*;\alpha),
\]
where $k_1$ counts retained ones and $k_*$ unknown cells. On the original full-inventory coverage event, the true full count never exceeds $k_1+k_*$; hence every such completion remains sound. Erasure does not manufacture a new IID sample, a new event or a refunded allocation.

For all upper-threshold outputs and every named erasure, the bit $b_i=\mathbf 1\{i\text{ is retained zero}\}$ is sufficient: the upper-count index is $N-\sum_i b_i$ and erasing $i$ sets $b_i=0$. To separate two different bit vectors, erase every other coordinate. Their indices are $N-1$ and $N$, separated by a threshold between the strictly ordered endpoints. Thus exactly $2^N$ response classes, or $N$ information bits, are necessary and sufficient for this language. Unknown and retained-one cells are genuinely equivalent here.

Now add an honest response reporting whether each original item is retained or requires paid reacquisition. Unknown and retained-one cells become distinct; retained-zero and retained-one cells were already distinct. Every ternary configuration is therefore separated, giving $3^N$ classes and $N\log_2 3$ information bits. A direct two-bit-per-cell encoding is a simple upper representation. These counts concern an expanded output/access contract, not a universal lower bound for all assurance tasks. Inventory, scope, protocol, funding, heap storage and source acquisition are additional burdens.

The finite calibration enumerates all 363 ternary states for $N=1,\ldots,5$, every named-erasure subset and 33,033 separating pairs. It independently recovers the $2^N$ and $3^N$ partitions. The calibration also compares 60 actual beta-quantile endpoints with 45 independent binomial-tail inversions and verifies strict endpoint order. Enumeration checks arithmetic and response semantics; sampling risk is governed by the native-event assumptions.

\subsection{Conditional replay versus truth authentication}
A native numeric row asserts $z\in[2/5,3/5]$ for the claim $z\leq3/10$, which is refuted. A coherent forged document keeps the same asserted statistic, bindings, experiment, premises and allocation, but replaces the interval by $[1/10,1/5]$ and recomputes the dual witness and hash. Replay admits the claim. For diagnostic external truth $z=1/2$ that admission is false, because the forged native event is not valid. Replay remains a correct implication from its asserted input; it has not verified stochastic interval generation or document authenticity. Narrowing only a reported range without changing its premises is a different manipulation, and is rejected. The implementation tests both manipulations.

\section{Experimental protocols and authority passports}\label{app:experiments}
\subsection{Source reconstruction and common access}
The release downloads raw public inputs by hash. Adult preprocessing standardizes columns 0, 2, 4, 10, 11 and 12 and one-hot encodes all others, retaining the missing category. Logistic regression uses liblinear, 500 iterations and seed 0, with regularization $C=1$ or $.01$. A source-only tree has depth 3. The source model's $.55$ threshold is a declared replacement. Test labels are not used to fit these models. Their evaluator availability is not a claim that real labels are cheap or scarce.

BPI2012 contains 13,087 traces and 262,200 events. XML event order is retained without timestamp sorting or lifecycle filtering. The first 70\% yields 9,160 traces and 3,149 variants. The provided reference accepts 5,826 traces through independently checked firing paths; the state equation and its left-nullspace necessity refute 3,067 \citep{murata1989}. A further 267 remain unresolved before cost-based alignment of four fixed source traces \citep{adriansyah2011}, using PM4Py 2.7.20 \citep{pm4py2023}. These checks qualify four additional negative traces, leaving 263 unknown and 8,897 known. The reference file is supplied computational authority; its historical human construction is unknown. These outcomes do not certify all original traces or a clinical workflow. The release recomputes the paths, exact basis and mature alignment checks rather than reading a hidden label cache.

The binary lifecycle seeds are 20261004120--20261004127. Each of 24 domain--retention--method cells contains eight nine-step trajectories. Source audit size is 2,048 (Adult) or 512 (BPI). Claims use declared thresholds .18/.20 and .4 as defined by the public program, a subgroup, replacements, a common source-selected target, point withdrawal, native-premise revocation and an unsupported version. Target roots allocate .003 and use looks at 32, 128, 512 and 2,048, capped at available population size. Endpoint, uniform and fresh rules share targets and budgets. Summary, full, 128-point and no-new-access retention change the information contract, not the truth owner.

\subsection{Native ordinal and numeric protocols}
The heterogeneous experiment uses seeds 20261004500--20261004502. The three Heart Disease frames have 303, 123 and 200 records with coded outcomes 0--4; the claim is their mean divided by 4. A 64-record source presence audit allocates .02. Transport and fresh strategies receive the same source work and retained points. A target mean owner allocates .02 across at most four looks, at sizes 32, 64, 128 and $N$, each at .005. Independently observed target categorical points retain only their target-truth premise. Coarse-relation revocation and target-truth revocation therefore have different consequences.

C-MAPSS FD001 and FD002 supply 100 and 259 reference RUL outcomes. Source values clipped at 200/normalized by 200 and target values clipped at 125/normalized by 125 are declared before audits. Thirty-two source observations and their .02 event use a qualified clipping envelope. Derived target points explicitly depend on both source truth and the clipping relation. Later independent target points do not. Relation revocation leaves the latter intact; 32 direct target observations restore the former subjects. A simulator future-version assumption is not supplied and is refused. This is output-contract revision on a simulated reference, not physical version-drift acquisition.

Seeds provides 210 kernel measurements. The recorded area is clipped to $[0,30]$ and normalized; the range defines the estimand, not a bound estimated from target data. Sixty-four source measurements fund .02. Unit conversion changes the consumer; an unrelated perimeter channel remains unconstrained; independently observed target area survives source-area revocation. The additional DryBean source was investigated but not retained because it did not expose a distinct operation.

Air Quality uses 9,357 dated records, 7,674 available CO values and 1,683 missing values. The native numeric estimand clips CO to $[0,10]$ and normalizes; negative missing sentinels are excluded from observed sampling and remain unconstrained full-frame coordinates. The source uniform sample has 256 available rows and allocation .02. An available census never acquires missing truth. The separate binary control uses the declared recorded exceedance CO $>2$ and leaves all missing binary outcomes arbitrary.

\subsection{Native statistical components}
Binary fixed-look calibration inverts exact hypergeometric tails under ideal uniform sampling without replacement. Multiple looks are union-funded before observing their outcomes. Numeric calibration uses the established bounded Hoeffding inequality for a fixed without-replacement sample, with radius $\sqrt{\log(2/\alpha)/(2 n)}$, clipped to the declared $[0,1]$ mean range. A complete census is exact. This intentionally conservative component does not reproduce the sharper time-uniform without-replacement construction of \citet{wor2020}.

The implementation computes a rational upper enclosure of the logarithm and square root before using the radius. Retained sample points and aggregate events are distinct objects. Data-driven uniformity, semantic truth, range and sampling implementation are not proved by these arithmetic checks. Temporal dependence in a fixed recorded frame does not invalidate ideal design-based random sampling.

\subsection{Fair comparators and privileged references}
Matched endpoint, uniform and fresh audit rules share the claim, public predictions, source event, retained paid points, target truth channel and allocations. The generic linear/integer control shares constraints and independently validates projection. Census gets additional outcomes and is charged. Exact finite enumeration gets complete small compatible-world access and is a computational reference. Invalid copying diagnostics deliberately omit a component and are not legitimate performance baselines. No method is declared inefficient solely because it loses on one metric.

The public health and sensor binary relation control uses seeds 20261004200--20261004203, three thresholds .25,.5,.75, two retention arms and two strategies. It produces 432 multivalued-file lifecycle decisions, 12 unavailable-Hungarian-outcome requests and 216 sensor-reference decisions. Its 43 receipts expose relation lineage and missing-access refusals. The heterogeneous suite's 852 decisions instead preserve native ordinal/numeric functionals; the studies do not share an arbitrary binary target merely to match implementation.

\subsection{Units, dependence and negative evidence}
Every reported decision is a query on a fixed recorded frame. Sampling orders, thresholds, retention arms and lifecycle states are correlated. Neither 1,728 binary nor 852 heterogeneous decisions are independent rare-event trials. Zero observed contradiction is not a statistical estimate of zero risk. The lifetime guarantee follows from the native experiment assumptions and funding theorem.

The release retains exact finite omissions for predicate identity, semantic relation, version scope, point-premise provenance and refusal; it tests noninvertible categorical/numeric aliases. Unit tests additionally reject narrowed certificates after rehashing, negative multipliers, changed bindings and duplicate funding. Replay rejection does not certify external authentication.

Source metadata cannot supply unavailable outcomes. Exact projection can be costly, and a correct stop rule need not select an optimal audit.

\section{Prior-method contracts}\label{app:prior}
ENTRUST combines verifiable controller models, partially instantiated stochastic system/environment models, reusable assurance arguments, a verified controller platform, runtime monitoring and reconfiguration. Its design-time stages establish controller/platform obligations; its runtime stages obtain adaptation evidence and update assurance cases. These capabilities are established in its accepted manuscript and two-domain evaluation. An event-owned statistical operator can be embedded in that surrounding assurance architecture.

Evergreen version 2 represents claims over a fixed finite relation using quantified semantic queries. Verification-aware stopping and confidence-sequence estimation are combined with query optimization and semiring provenance. This connects semantic evidence to a verified aggregate. Our edit contract maintains the statistical experiment behind that evidence as consumers and premise dependencies change. The comparison concerns these mathematical contracts; empirical runtimes of the original complete engines are separate from the component controls used here.

PPE's final ICML 2025 Theorem 2.1, proof and Algorithm 1 establish a bounded nonnegative random-query correction with predictable model and data-collection updates. PPRM's final ICML 2026 Theorem 3.2 controls false alarms with predictable reliance on unlabeled predictions. Its intermittent-label extension additionally assumes bounds for unlabeled periods; its computational analysis is separate from that guarantee. PPI combines prediction-based estimation with labeled correction. These methods give positive native inference components. The fixed-frame consumer/edit lifecycle preserves their applicable event rather than treating a change in statistical ownership as a predictable future update.

Rivasplata's stochastic-prior change of measure retains the joint exponential moment; Dziugaite--Roy's density control prices prior dependence. Valid future-only rebasing supplies a third response to the selected-start obstruction. Confidence-set projection, incomplete-information semantics, linear optimization and finite-population concentration support the surrounding construction.

Other selection-aware methods control different interfaces. Reusable holdouts mediate adaptive access through stability under an IID query-and-budget contract \citep{dwork2015}. PoSI provides simultaneous inference over linear-submodel parameters \citep{berk2013}. E-value combination merges already valid inputs under specified rules, such as fixed averaging \citep{vovk2021}. Their access, model and merging conditions belong to the native inference service.

ATMS Section 4 represents minimal consistent supporting environments, propagates alternative justifications and manages contradictions and subsumption. This supplies assumption-conditioned support maintenance. HaPPI Sections 2 and 4--5 use independent fact probabilities, a provenance semiring and maintained expressions under insertion, deletion and probability change. Query-result probability describes existence in a possible-world model. A funded coverage event instead describes the failure probability of an inferential procedure. The framework preserves both contextual support and the identity/allocation of the native coverage event.

\section{Typed burden, storage and reproducibility}\label{app:burden}
Source acquisition, training, reference construction, target observations, repeated exposures, optimization, arithmetic replay and retained memory are separate quantities. The supplied reference and semantic relations are authority, not free consumed compute. Unknown human preparation remains unknown. The current compact binary grouping does not imply economical full provenance.

\begin{figure}[t]\centering\includegraphics[width=\linewidth]{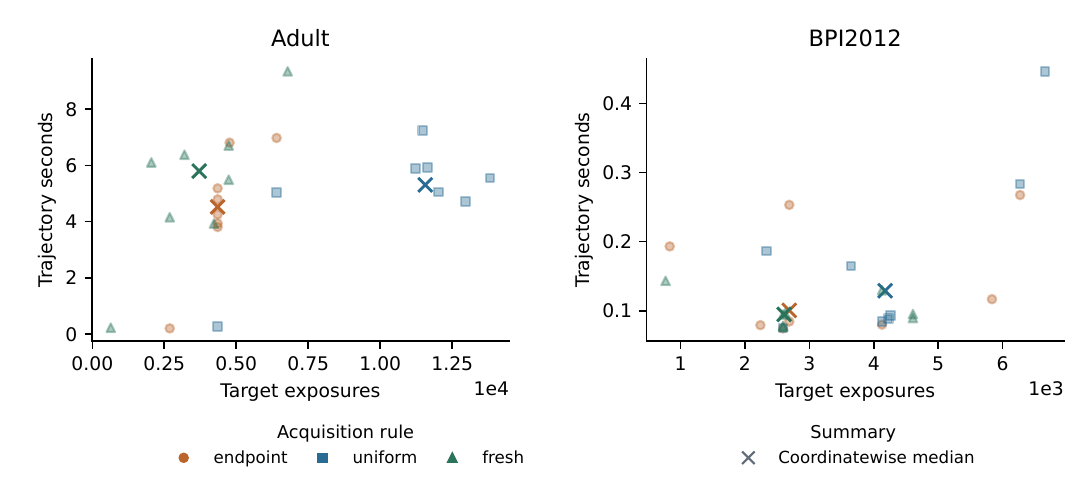}
\caption{Full-retention resource pairs across eight common orders. Individual trajectories share marker size and opacity; shape and color identify the acquisition rule. Same-color crosses mark coordinatewise medians. Exposures and time remain separate resource dimensions.}\label{fig:burden}\end{figure}

The native suite exports 291 full JSON certificates, about 259 MB with source bindings and constraints duplicated across snapshots. Its literal storage is a measured negative limitation. A full receipt can still offer auditability when it is computationally expensive. The code supplement excludes raw NASA redistribution and generated large outputs; downloaded version-pinned public inputs and commands reconstruct them. Timings are local measurements and may change with dependency versions, hardware and concurrent work. Claims of theory validity do not depend on wall-clock.

The artifact contains exact future-edit and conditional-forgery checks and 43 focused tests. Its code is MIT-licensed; source data and the official template retain separate terms. Reproduction materials have been prepared and will be released separately. The implementation provides theorem checks, raw source preparation, lifecycles, native-outcome studies and table/figure generation. New executions record code hashes, protocol and failures without overwriting outputs. Predictions and workflow outcomes are rebuilt from hash-pinned public inputs. No private data, internal history or hidden cached frame is required.

\section{Controlled real-frame component interventions}\label{app:components}
The component study uses four fixed frames: Cleveland recorded codes, C-MAPSS FD001, Seeds and Air Quality. It is a targeted functional intervention, rather than a sample of naturally occurring guarantee failures. Source outcomes are completely inspected; every third available item is additionally inspected on the target channel. Heart maps binary presence to the ordinal range $[x/4,x]$. C-MAPSS uses the declared clipping relation. Seeds uses an identity after normalization of its unit-converted area; its wrong-functional diagnostic copies that mean to the unrelated perimeter normalized by 20. Air relates observed reference values to the same coordinates in the expanded frame and leaves missing outcomes free.

The source census is charged and owns a fixed statistic with allocation .02. Its deterministic truth is a strong experimental control. Copying that statistic to target coordinates under the same owner is rejected; the omission diagnostic ignores identity and treats the source interval as a target interval. All domains use the predeclared thresholds $.1,.25,.5,.75,.9$. Each has one contradictory copying decision among these five requests. Conclusions are checked against recorded truth, with arbitrary missing completions permitted in Air. This diagnostic removes a required contract, and is distinct from a fair acquisition competitor.

For dependency withdrawal, we remove the relation and compare the active rows with those before the edit. Each available item has one derived target point represented by two inequalities. The stale-point counts in Table~\ref{tab:ablation} count points rather than rows; semantic relation rows are counted separately in the artifact. The lawful system retains the independent every-third target points. Selective recovery purchases each remaining available target once. Full reset purchases every available target again. The comparison fixes the restored information, while construction, projection, replay and storage are recorded separately. Air retains 1,683 missing outcomes in both recovered regions.

A future-scope request requires an unqualified premise. The complete mechanism returns OPEN. Removing that requirement, or substituting the last qualified answer, yields unsupported decisions. This is a scope error, even if the recorded-version answer coincidentally remains correct. Statistical selection accounting retains its separate exact witness; these real-frame omissions do not estimate a selection-error rate.

The two C-MAPSS provenance histories use all 100 old outputs in both arms. Relation-derived target points in one arm and 100 independent target inspections in the other fix exactly the same 200 coordinates. Their current joint regions are identical singletons, not merely equal projected means. The same relation edit then removes all target constraints in the first history but leaves direct target points in the second. The artifact exports and replays all four resulting receipts. This controlled construction uses actual recorded values without asserting that the paired histories occurred in a field deployment.

\section{Supplementary statistical recovery}\label{app:strengthening}
\subsection{Partial audits and calendar-defined consumers}
Forty prespecified source orders, 20261007100--20261007139, are paired across retention arms. C-MAPSS FD002 uses the cap200/cap125 relation and 32 source observations. Air Quality uses a 256-observation source audit of the complete available frame, followed by requests for March 2004, September 2004 and March 2005. Date metadata defines these consumers; their available reference frames contain 492, 556 and 724 observations. These are calendar-defined task changes, rather than evidence of physical drift. Missing analyzer outcomes remain outside each explicitly available-frame mean.

Selective retention keeps qualified target points; reset discards them. Both pay the same source work and receive the same target priority permutation, restricted to their remaining populations. The original full-frame mean is not renamed as a monthly mean. Instead, the target mean combines the retained exact sum with a uniform audit of the fixed remaining population. Before drawing, a new owner allocates .02 across at most five fixed looks, at 32,64,128,256 and census, each receiving .004. Source allocation .02 remains spent. The strong ordinary-inference comparator adds the retained sum to the scaled complement interval and intersects \emph{all} funded looks. It matches the framework on every query. Keeping only the latest interval can be looser, which is an information-retention difference rather than a new statistical bound.

\input{tables/supplementary_recovery}

The order intervals expose the different roles of retention. At census it can avoid reacquiring already observed target values. Before census the paired median difference can include zero. In this common-priority panel retaining points never increases the observation count, but this is not a uniform acquisition theorem. Independently drawn arm orders can reverse a comparison: the preserved three-order panel includes a C-MAPSS $.75$ case requiring 227 new values under retention and 128 under reset. The paired study controls that sample-path difference and reports months separately, rather than treating one calendar order as evidence of a retention effect.

Every projected range in the paired panel contains its recorded reference, and every qualified decision agrees with it. Orders are randomized units conditional on fixed recorded frames; thresholds, looks and lifecycle queries within each order remain dependent. The marginal 95\% median intervals invert a binomial sign test: for sorted costs $X_{(1)},\ldots,X_{(n)}$, use $[X_{(k)},X_{(n-k+1)}]$, where $\Pr\{\operatorname{Bin}(n,1/2)\leq k-1\}\leq.025$ and $k$ is largest. For $n=40$, $k=14$; ties make coverage conservative. These intervals measure order variation, not deployment risk. Source observations, new exposures, inference, projection, verification and wall time remain separately recorded.

\subsection{Practical cache reuse and a tagged control}
Two common reuse rules either inherit a source confidence interval unchanged or relabel source samples without their origin. We test them on four declared edits: Heart presence-to-grade, C-MAPSS FD001 clipping, Seeds area-to-perimeter, and Air available-to-full-frame. The orders, source observations and public semantics are matched. A tagged completion control keeps channel identity and applies qualified transformations, leaving unrelated outcomes unknown. On this subcase its ordinary annotations and projection reproduce the framework's valid edit responses.

\input{tables/cache_controls}

The fixed threshold grid is $K=j/400$, $j=0,\ldots,400$, declared before outcomes. Source-census reuse exposes the full difference between the two functionals; repeating its order cannot increase statistical evidence. The release includes partial 32- and 128-observation endpoints, where source confidence width can conceal a semantic mismatch. Consequently a quiet partial audit does not justify copying a guarantee. The tagged control also shows the framework's proper boundary: provenance maintenance and a union bound can implement these operations when they preserve the native statistic, allocation and bindings. Event ownership makes that contract explicit and checkable; it supplies lifecycle semantics and receipts rather than a stronger confidence interval. Air's permissible missing completions measure unsupported assurance, not an observed analyzer error rate.

\paragraph{Nonlinear-consumer remark.}
An assured nonnegative mean $m\in[a,b]$ can support $q=m^2$ through tangents $q\geq2tm-t^2$ and the chord $q\leq(a+b)m-ab$, whose dependencies include the certified mean-range premises. The executable 128-of-259 C-MAPSS pilot checks this outer construction and its revocation. It illustrates compilation into the linear layer, not a claim about the mean of squared outcomes or a general nonlinear solver; its detailed receipts remain in the release.

\subsection{Spending belongs to owner births}
An allocation illustration creates four future owners with $\alpha_j=.05/2^{j+1}$, $j=0,\ldots,3$. Twenty threshold consumers per owner reuse the current region without further spending. Revoking each owner premise removes its rows while retaining its expenditure; total allocation is $3/64$. This is alpha-spending attached to experiments. Alpha-investing supplies a different sequential expected-false-discovery criterion \citep{foster2008}; its earned testing wealth does not replace the pathwise lifetime-error contract here. The illustration checks the bookkeeping distinction, not a new power comparison.

%% file: tables/supplementary_recovery.tex
\begin{table}[htbp]\centering\small
\begin{tabular}{llrrrr}\toprule
Frame & $K$ & Selective & Reset & Paired difference & \shortstack{Before census\\selective / reset}\\\midrule
C-MAPSS & 0.5 & 227 [227,227] & 259 [259,259] & -32 [-32,-32] & 2/40; 0/40\\
C-MAPSS & 0.75 & 128 [128,128] & 128 [128,256] & -14.5 [-128,0] & 38/40; 40/40\\
C-MAPSS & 0.9 & 32 [32,32] & 32 [32,64] & 0 [0,0] & 40/40; 40/40\\
Air Mar04 & 0.25 & 476 [475,477] & 492 [492,492] & -16 [-17,-15] & 0/40; 0/40\\
Air Mar04 & 0.5 & 64 [64,64] & 64 [64,64] & 0 [0,0] & 40/40; 40/40\\
Air Sep04 & 0.25 & 538 [537,539] & 556 [556,556] & -18 [-19,-17] & 0/40; 0/40\\
Air Sep04 & 0.5 & 64 [64,64] & 64 [64,64] & 0 [0,0] & 40/40; 40/40\\
Air Mar05 & 0.25 & 700 [698,701] & 724 [724,724] & -24 [-26,-23] & 0/40; 0/40\\
Air Mar05 & 0.5 & 64 [64,64] & 64 [64,64] & 0 [0,0] & 40/40; 40/40\\
\bottomrule\end{tabular}
\caption{New target observations to the first qualified decision on fixed recorded frames. Each cell gives a median and a marginal 95\% binomial order-statistic interval over 40 paired orders; the difference is selective minus reset within an order, not a difference of marginal medians. Both arms pay the same source audit, use a common target priority order, and create their own fixed-complement owner. Months are separate consumers. Before-census counts exclude the already retained points. Intervals quantify random-order variation conditional on these frames, not deployment risk or a population-level retention effect. Timing and all query-level outcomes remain in the release.}\label{tab:supplementary}
\end{table}

%% file: tables/cache_controls.tex
\begin{table}[t]\centering\small
\begin{tabular}{lrrrr}\toprule
Frame & \shortstack{Source census\\size} & \shortstack{Cached CI\\contrad. / qualified} & \shortstack{Blind points\\contrad. / qualified} & \shortstack{Tagged completion\\contrad. / qualified}\\\midrule
Heart & 303 & 90/401 & 90/401 & 0/263\\
C-MAPSS & 100 & 87/401 & 87/401 & 0/401\\
Seeds & 210 & 94/401 & 94/401 & 0/1\\
Air & 7,674 & 72/401 & 0/329 & 0/329\\
\bottomrule\end{tabular}
\caption{Unique threshold-query counts at the source-census endpoint of declared public-record consumer edits. Each frame has 401 fixed thresholds; the 40 source orders are identical at census and counted once. Partial 32- and 128-observation endpoints retain all 40 orders in the release. Cached intervals retain a source mean for a changed consumer; blind points relabel its samples; tagged completion applies qualified transformations or leaves unrelated target outcomes unknown. Contradictions mean disagreement with recorded truth, or exclusion of a permitted missing-outcome completion for Air. These query fractions are controlled edit diagnostics, not estimates of deployment failure probability. All methods receive the same source observations and public semantic information.}\label{tab:cache-controls}
\end{table}

%% file: main.bbl
\begin{thebibliography}{36}
\providecommand{\natexlab}[1]{#1}
\providecommand{\url}[1]{\texttt{#1}}
\expandafter\ifx\csname urlstyle\endcsname\relax
  \providecommand{\doi}[1]{doi: #1}\else
  \providecommand{\doi}{doi: \begingroup \urlstyle{rm}\Url}\fi

\bibitem[Adriansyah et~al.(2011)Adriansyah, van Dongen, and van~der
  Aalst]{adriansyah2011}
Arya Adriansyah, Boudewijn~F. van Dongen, and Wil M.~P. van~der Aalst.
\newblock Conformance checking using cost-based fitness analysis.
\newblock In \emph{2011 IEEE 15th International Enterprise Distributed Object
  Computing Conference}, pp.\  55--64, 2011.
\newblock \doi{10.1109/EDOC.2011.12}.

\bibitem[Alchourr{\'o}n et~al.(1985)Alchourr{\'o}n, G{\"a}rdenfors, and
  Makinson]{agm1985}
Carlos~E. Alchourr{\'o}n, Peter G{\"a}rdenfors, and David Makinson.
\newblock On the logic of theory change: Partial meet contraction and revision
  functions.
\newblock \emph{The Journal of Symbolic Logic}, 50\penalty0 (2):\penalty0
  510--530, 1985.
\newblock \doi{10.2307/2274239}.

\bibitem[Angelopoulos et~al.(2023)Angelopoulos, Bates, Fannjiang, Jordan, and
  Zrnic]{ppi2023}
Anastasios~N. Angelopoulos, Stephen Bates, Clara Fannjiang, Michael~I. Jordan,
  and Tijana Zrnic.
\newblock Prediction-powered inference.
\newblock \emph{Science}, 382:\penalty0 669--674, 2023.
\newblock \doi{10.1126/science.adi6000}.
\newblock URL \url{https://www.science.org/doi/10.1126/science.adi6000}.

\bibitem[Becker \& Kohavi(1996)Becker and Kohavi]{adult1996}
Barry Becker and Ron Kohavi.
\newblock Adult.
\newblock UCI Machine Learning Repository, 1996.
\newblock URL \url{https://archive.ics.uci.edu/dataset/2/adult}.

\bibitem[Berk et~al.(2013)Berk, Brown, Buja, Zhang, and Zhao]{berk2013}
Richard Berk, Lawrence Brown, Andreas Buja, Kai Zhang, and Linda Zhao.
\newblock Valid post-selection inference.
\newblock \emph{The Annals of Statistics}, 41\penalty0 (2):\penalty0 802--837,
  2013.
\newblock \doi{10.1214/12-AOS1077}.

\bibitem[Berti et~al.(2023)Berti, van Zelst, and Schuster]{pm4py2023}
Alessandro Berti, Sebastiaan van Zelst, and Daniel Schuster.
\newblock {PM4Py}: A process mining library for python.
\newblock \emph{Software Impacts}, 17:\penalty0 100556, 2023.
\newblock \doi{10.1016/j.simpa.2023.100556}.

\bibitem[Calinescu et~al.(2018)Calinescu, Weyns, Gerasimou, Iftikhar, Habli,
  and Kelly]{entrust2018}
Radu Calinescu, Danny Weyns, Simos Gerasimou, Muhammad~Usman Iftikhar, Ibrahim
  Habli, and Tim Kelly.
\newblock Engineering trustworthy self-adaptive software with dynamic assurance
  cases.
\newblock \emph{IEEE Transactions on Software Engineering}, 44\penalty0
  (11):\penalty0 1039--1069, 2018.
\newblock \doi{10.1109/TSE.2017.2738640}.
\newblock URL \url{https://doi.org/10.1109/TSE.2017.2738640}.

\bibitem[Charytanowicz et~al.(2010)Charytanowicz, Niewczas, Kulczycki,
  Kowalski, and Lukasik]{seeds2010}
Magorzata Charytanowicz, Jerzy Niewczas, Piotr Kulczycki, Piotr Kowalski, and
  Szymon Lukasik.
\newblock Seeds.
\newblock UCI Machine Learning Repository, 2010.
\newblock URL \url{https://archive.ics.uci.edu/dataset/236/seeds}.

\bibitem[Chugg et~al.(2023)Chugg, Wang, and Ramdas]{chugg2023}
Ben Chugg, Hongjian Wang, and Aaditya Ramdas.
\newblock A unified recipe for deriving (time-uniform) {PAC-Bayes} bounds.
\newblock \emph{Journal of Machine Learning Research}, 24\penalty0
  (372):\penalty0 1--61, 2023.
\newblock URL \url{https://jmlr.org/papers/v24/23-0401.html}.

\bibitem[Clopper \& Pearson(1934)Clopper and Pearson]{clopper1934}
C.~J. Clopper and E.~S. Pearson.
\newblock The use of confidence or fiducial limits illustrated in the case of
  the binomial.
\newblock \emph{Biometrika}, 26\penalty0 (4):\penalty0 404--413, 1934.
\newblock \doi{10.1093/biomet/26.4.404}.

\bibitem[Csillag et~al.(2025)Csillag, Struchiner, and Goedert]{ppe2025}
Daniel Csillag, Claudio~Jos{\'e} Struchiner, and Guilherme~Tegoni Goedert.
\newblock Prediction-powered e-values.
\newblock In \emph{Proceedings of the 42nd International Conference on Machine
  Learning}, volume 267 of \emph{Proceedings of Machine Learning Research},
  pp.\  11493--11514, 2025.
\newblock URL \url{https://proceedings.mlr.press/v267/csillag25a.html}.

\bibitem[de~Kleer(1986)]{dekleer1986}
Johan de~Kleer.
\newblock An assumption-based {TMS}.
\newblock \emph{Artificial Intelligence}, 28\penalty0 (2):\penalty0 127--162,
  1986.
\newblock \doi{10.1016/0004-3702(86)90080-9}.
\newblock URL
  \url{https://www.dekleer.org/Publications/An\%20Assumption-Based\%20TMS.pdf}.

\bibitem[Dwork et~al.(2015)Dwork, Feldman, Hardt, Pitassi, Reingold, and
  Roth]{dwork2015}
Cynthia Dwork, Vitaly Feldman, Moritz Hardt, Toniann Pitassi, Omer Reingold,
  and Aaron Roth.
\newblock Generalization in adaptive data analysis and holdout reuse.
\newblock In \emph{Advances in Neural Information Processing Systems},
  volume~28, 2015.
\newblock URL
  \url{https://proceedings.neurips.cc/paper/2015/hash/bad5f33780c42f2588878a9d07405083-Abstract.html}.

\bibitem[Dziugaite \& Roy(2018)Dziugaite and Roy]{dziugaite2018}
Gintare~Karolina Dziugaite and Daniel~M. Roy.
\newblock Data-dependent {PAC-Bayes} priors via differential privacy.
\newblock In \emph{Advances in Neural Information Processing Systems},
  volume~31, 2018.
\newblock URL
  \url{https://papers.nips.cc/paper_files/paper/2018/hash/9a0ee0a9e7a42d2d69b8f86b3a0756b1-Abstract.html}.

\bibitem[Foster \& Stine(2008)Foster and Stine]{foster2008}
Dean~P. Foster and Robert~A. Stine.
\newblock Alpha-investing: A procedure for sequential control of expected false
  discoveries.
\newblock \emph{Journal of the Royal Statistical Society: Series B (Statistical
  Methodology)}, 70\penalty0 (2):\penalty0 429--444, 2008.
\newblock \doi{10.1111/j.1467-9868.2007.00643.x}.

\bibitem[Gaur et~al.(2021)Gaur, Dang, Bhattacharya, and Bedathur]{gaur2021}
Garima Gaur, Abhishek Dang, Arnab Bhattacharya, and Srikanta Bedathur.
\newblock Computing and maintaining provenance of query result probabilities in
  uncertain knowledge graphs.
\newblock In \emph{Proceedings of the 30th ACM International Conference on
  Information and Knowledge Management}, pp.\  545--554, 2021.
\newblock \doi{10.1145/3459637.3482330}.
\newblock URL \url{https://arxiv.org/abs/2108.07758}.

\bibitem[Golovin et~al.(2010)Golovin, Krause, and Ray]{golovin2010}
Daniel Golovin, Andreas Krause, and Debajyoti Ray.
\newblock Near-optimal {Bayesian} active learning with noisy observations.
\newblock In \emph{Advances in Neural Information Processing Systems},
  volume~23, pp.\  766--774, 2010.
\newblock URL
  \url{https://proceedings.neurips.cc/paper_files/paper/2010/file/1e6e0a04d20f50967c64dac2d639a577-Paper.pdf}.

\bibitem[Green et~al.(2007)Green, Karvounarakis, and Tannen]{green2007}
Todd~J. Green, Grigoris Karvounarakis, and Val Tannen.
\newblock Provenance semirings.
\newblock In \emph{Proceedings of the 26th ACM SIGMOD-SIGACT-SIGART Symposium
  on Principles of Database Systems}, pp.\  31--40, 2007.
\newblock \doi{10.1145/1265530.1265535}.
\newblock URL \url{https://doi.org/10.1145/1265530.1265535}.

\bibitem[Hansson(2007)]{hansson2007}
Sven~Ove Hansson.
\newblock Three approaches to finitude in belief change.
\newblock In Toni R{\o}nnow-Rasmussen, Bj{\"o}rn Petersson, Jonas Josefsson,
  and Dan Egonsson (eds.), \emph{Hommage {\`a} Wlodek: Philosophical Papers
  Dedicated to Wlodek Rabinowicz}. Department of Philosophy, Lund University,
  2007.
\newblock URL
  \url{https://www.fil.lu.se/hommageawlodek/site/papper/HanssonSvenove.pdf}.

\bibitem[Horowitz \& Manski(2000)Horowitz and Manski]{horowitz2000}
Joel~L. Horowitz and Charles~F. Manski.
\newblock Nonparametric analysis of randomized experiments with missing
  covariate and outcome data.
\newblock \emph{Journal of the American Statistical Association}, 95\penalty0
  (449):\penalty0 77--84, 2000.
\newblock \doi{10.1080/01621459.2000.10473902}.
\newblock URL \url{https://doi.org/10.1080/01621459.2000.10473902}.

\bibitem[Imieli{\'n}ski \& Lipski(1984)Imieli{\'n}ski and
  Lipski]{imielinski1984}
Tomasz Imieli{\'n}ski and Witold Lipski.
\newblock Incomplete information in relational databases.
\newblock \emph{Journal of the ACM}, 31\penalty0 (4):\penalty0 761--791, 1984.
\newblock \doi{10.1145/1634.1886}.

\bibitem[Janosi et~al.(1989)Janosi, Steinbrunn, Pfisterer, and
  Detrano]{heart1989}
Andras Janosi, William Steinbrunn, Matthias Pfisterer, and Robert Detrano.
\newblock Heart disease.
\newblock UCI Machine Learning Repository, 1989.
\newblock URL \url{https://archive.ics.uci.edu/dataset/45/heart+disease}.

\bibitem[Javdani et~al.(2014)Javdani, Chen, Karbasi, Krause, Bagnell, and
  Srinivasa]{javdani2014}
Shervin Javdani, Yuxin Chen, Amin Karbasi, Andreas Krause, Drew Bagnell, and
  Siddhartha Srinivasa.
\newblock Near optimal {Bayesian} active learning for decision making.
\newblock In \emph{Proceedings of the Seventeenth International Conference on
  Artificial Intelligence and Statistics}, volume~33 of \emph{Proceedings of
  Machine Learning Research}, pp.\  430--438, 2014.
\newblock URL \url{https://proceedings.mlr.press/v33/javdani14.html}.

\bibitem[Lee et~al.(2026)Lee, Han, Sen, Yeom, {\c{C}}etintemel, and
  Datta]{evergreen2026}
Alexander~W. Lee, Benjamin Han, Shayak Sen, Sam Yeom, Ugur {\c{C}}etintemel,
  and Anupam Datta.
\newblock Evergreen: Efficient claim verification for semantic aggregates.
\newblock \emph{arXiv preprint arXiv:2604.26180}, 2026.
\newblock URL \url{https://arxiv.org/html/2604.26180v2}.
\newblock Version 2.

\bibitem[Murata(1989)]{murata1989}
Tadao Murata.
\newblock Petri nets: Properties, analysis and applications.
\newblock \emph{Proceedings of the IEEE}, 77\penalty0 (4):\penalty0 541--580,
  1989.
\newblock \doi{10.1109/5.24143}.

\bibitem[{NASA}(2022)]{cmapss2008}
{NASA}.
\newblock {CMAPSS} jet engine simulated data.
\newblock NASA Open Data Portal, Prognostics Center of Excellence, 2022.
\newblock URL
  \url{https://data.nasa.gov/dataset/cmapss-jet-engine-simulated-data}.
\newblock Legacy simulation data; original construction described by Saxena et
  al. (2008).

\bibitem[Oveisi et~al.(2014)Oveisi, Delgrande, Pelletier, and
  Popowich]{oveisi2014}
Mehrdad Oveisi, James~P. Delgrande, Francis~Jeffry Pelletier, and Fred
  Popowich.
\newblock Belief change and base dependence.
\newblock In \emph{Principles of Knowledge Representation and Reasoning}, pp.\
  151--159, 2014.
\newblock URL \url{https://www2.cs.sfu.ca/~jim/publications/KR14b.pdf}.

\bibitem[Rivasplata et~al.(2020)Rivasplata, Kuzborskij, Szepesv{\'a}ri, and
  Shawe-Taylor]{rivasplata2020}
Omar Rivasplata, Ilja Kuzborskij, Csaba Szepesv{\'a}ri, and John Shawe-Taylor.
\newblock {PAC-Bayes} analysis beyond the usual bounds.
\newblock In \emph{Advances in Neural Information Processing Systems},
  volume~33, 2020.
\newblock URL
  \url{https://proceedings.neurips.cc/paper_files/paper/2020/hash/c3992e9a68c5ae12bd18488bc579b30d-Abstract.html}.

\bibitem[Rodr{\'i}guez-G{\'a}lvez et~al.(2024)Rodr{\'i}guez-G{\'a}lvez,
  Thobaben, and Skoglund]{morepac2024}
Borja Rodr{\'i}guez-G{\'a}lvez, Ragnar Thobaben, and Mikael Skoglund.
\newblock More pac-bayes bounds: From bounded losses, to losses with general
  tail behaviors, to anytime validity.
\newblock \emph{Journal of Machine Learning Research}, 25\penalty0
  (110):\penalty0 1--43, 2024.
\newblock URL \url{https://jmlr.org/papers/v25/23-1360.html}.

\bibitem[Saxena et~al.(2008)Saxena, Goebel, Simon, and Eklund]{saxena2008}
Abhinav Saxena, Kai Goebel, Don Simon, and Neil Eklund.
\newblock Damage propagation modeling for aircraft engine run-to-failure
  simulation.
\newblock In \emph{2008 International Conference on Prognostics and Health
  Management}, pp.\  1--9, 2008.
\newblock \doi{10.1109/PHM.2008.4711414}.
\newblock URL \url{https://ntrs.nasa.gov/citations/20090029214}.

\bibitem[van Dongen(2012)]{bpi2012}
B.~F. van Dongen.
\newblock {BPI} challenge 2012.
\newblock Eindhoven University of Technology dataset, 2012.
\newblock URL \url{https://research.tue.nl/en/datasets/bpi-challenge-2012/}.

\bibitem[Vito(2008)]{air2008}
Saverio Vito.
\newblock Air quality.
\newblock UCI Machine Learning Repository, 2008.
\newblock URL \url{https://archive.ics.uci.edu/dataset/360/air+quality}.

\bibitem[Vovk \& Wang(2021)Vovk and Wang]{vovk2021}
Vladimir Vovk and Ruodu Wang.
\newblock E-values: Calibration, combination and applications.
\newblock \emph{The Annals of Statistics}, 49\penalty0 (3):\penalty0
  1736--1754, 2021.
\newblock \doi{10.1214/20-AOS2020}.

\bibitem[Waudby-Smith \& Ramdas(2020)Waudby-Smith and Ramdas]{wor2020}
Ian Waudby-Smith and Aaditya Ramdas.
\newblock Confidence sequences for sampling without replacement.
\newblock In \emph{Advances in Neural Information Processing Systems},
  volume~33, 2020.
\newblock URL
  \url{https://papers.nips.cc/paper_files/paper/2020/file/e96c7de8f6390b1e6c71556e4e0a4959-Paper.pdf}.

\bibitem[Wu et~al.(2024)Wu, Zhang, Ch{\'e}rief-Abdellatif, and
  Seldin]{recursive2024}
Yi-Shan Wu, Yijie Zhang, Badr-Eddine Ch{\'e}rief-Abdellatif, and Yevgeny
  Seldin.
\newblock Recursive {PAC-Bayes}: A frequentist approach to sequential prior
  updates with no information loss.
\newblock In \emph{Advances in Neural Information Processing Systems},
  volume~37, 2024.
\newblock URL
  \url{https://proceedings.neurips.cc/paper_files/paper/2024/hash/1feddcfb42229ca84d3070c7d540daaa-Abstract-Conference.html}.

\bibitem[Zhang et~al.(2026)Zhang, Cai, Yu, and Simeone]{pprm2026}
Guangyi Zhang, Yunlong Cai, Guanding Yu, and Osvaldo Simeone.
\newblock Prediction-powered risk monitoring of deployed models for detecting
  harmful distribution shifts.
\newblock In \emph{Proceedings of the 43rd International Conference on Machine
  Learning}, volume 306 of \emph{Proceedings of Machine Learning Research},
  pp.\  154241--154262, 2026.
\newblock URL \url{https://proceedings.mlr.press/v306/zhang26g.html}.

\end{thebibliography}
